\newtheorem{theorem}{Theorem}[section]
\newtheorem{proposition}[theorem]{Proposition}
\newtheorem{lemma}[theorem]{Lemma}
\theoremstyle{definition}
\newtheorem{remark}[theorem]{Remark}
\newtheorem{definition}[theorem]{Definition}
\newtheorem{example}[theorem]{Example}
\def\R{{\mathbb R}}
\def\H{{\mathcal H}}
\def\M{{\cal M}}
\def\P{{\mathcal P }}
\def\J{{\cal J}}
\def\F{{\mathcal F }}
\newcommand{\ud}{\, {\rm d}}
\newcommand{\nnb}{\nonumber}
\def\RF{\rho_{F}}     
\def\RW{\rho_{W}}  
\def\RE{\rho_{E}}  
\def\d{{\mathsf d}}
\def\is{q}   
\def\tm{ {\widetilde m} }
\def\tS{{\widetilde S} }
\def\Y{{\cal Y}}
\def\Z{{\cal Z}}
\def\H{{\cal H}}
\newcommand\blue[1]{\textcolor{blue}{#1}}
\def\Zob{{Z^{(1)}}}
\def\Ztb{{Z^{(2)}}}
\def\Zo{ Z^{(1)}}
\def\Zt{ Z^{(2)}}
\title{Convex Risk Measures for the Aggregation of Multiple Information Sources and Applications in Insurance }
\date{}
\author[a]{\small G. I. Papayiannis}
\author[b]{\small A. N. Yannacopoulos}
\affil[a]{\footnotesize Hellenic Naval Academy, Department of Naval Sciences, Section of Mathematics and Mathematical Modelling and Applications Laboratory, Piraeus, Greece; Athens University of Economics \& Business, Stochastic Modelling and Applications Laboratory, Athens, Greece}
\affil[b]{\footnotesize Athens University of Economics \& Business, Department of Statistics and Stochastic Modelling and Applications Laboratory, Athens, Greece}
\begin{document}
	
\fancypagestyle{plain}{%
	\fancyhead[R]{\blue{Published in Scandinavian Actuarial Journal \url{https://doi.org/10.1080/03461238.2018.1461129}}} 
	\renewcommand{\headrulewidth}{0pt}}
	
\maketitle


\begin{abstract}
We propose a novel class of convex risk measures, based on the concept of the Fr\'echet mean, designed in order to handle uncertainty which arises from multiple information sources regarding the risk factors of interest. The proposed risk measures robustly characterize the exposure of the firm, by filtering out appropriately the partial information available in individual sources into an aggregate model for the risk factors of interest. Importantly, the proposed risks can be expressed in closed analytic forms allowing for interesting qualitative interpretations as well as comparative statics and thus facilitate their use in the everyday risk management process of the insurance firms. The potential use of the proposed risk measures in insurance is illustrated by two concrete applications, capital risk allocation and premia calculation under uncertainty.
\end{abstract}

\noindent {\bf Keywords:} Model Uncertainty; Model Aggregation; Convex Risk Measures; Robustness; Fr\'echet Variance;

\section{Introduction}

Financial and insurance risk managers often face situations where they have to make decisions concerning risks whose exact probability distribution is unknown on account of either incomplete data or information, or because they are connected with subjective beliefs towards risk factors  or multiple information sources of perhaps unknown reliability. Such situations can be abstracted by considering the risk as a random variable $X:\Omega\to\R$, with the true probability measure, $\mu_0$, is unknown to the risk manager who is only aware of a family $\mathcal{M}$ consisting of possible alternative probability models $\mu_i$, $i \in {\cal I}$, where ${\cal I}$ is an index set, for the possible scenarios on $\Omega$. These alternative models may come from various sources (e.g. by different experts or may arise as partial models consistent with incomplete data available to the individual sources) and are assumed to be approximations to the true probability model $\mu_0$ for the different states of the world in $\Omega$. The success of each approximation $\mu_i$ for $\mu_0$, depends on the reliability of the source or the expert's ability to model the risk under question. The important issue arising is which of the potential probability models in $\mathcal{M}$ should be used by the risk manager in order to assess the risk $X$ and what is an appropriate risk measure under which the policy decisions (e.g. risk premia calculation or capital allocation policies) will be made. 

Situations as described above are abundant in insurance; as a simple concrete example consider the problem of evaluating the total risk position $X$ of a multi-sector insurance firm consisting of $d$ sectors, each contributing the risk $X_{\is}$, $\is=1,\ldots, K$ to the total risk of the firm $X=\sum_{\is=1}^{K} X_{\is}$. In this case, each information source can be thought of as the manager of each sector of the firm who is well aware of the scenarios $\Omega_{\is} \subset \Omega$ that affect the risk of her sector $X_{\is}$, but may be unaware of other scenarios $\Omega_j \in \Omega$  that affect the risks $X_{\is}$ for $\is \ne i$, which are needed to evaluate the total risk $X$. As the modelling of dependence can be quite difficult there is uncertainty in the dependence structure between the various risks, thus leading to a multitude of probability models $\M$ which can be used to estimate the risk induced by the random variable $X$. 

A very convenient way of treating situations as  described above is through the use of convex risk measures. The class of convex risk measures (see \cite{follmer2002robust}) as an extension of the class of coherent risk measures (first introduced in \cite{artzner1999coherent}) has become widespread in theoretical and applied insurance research and it is expected to gradually become the industry standard in insurance within the Solvency framework, as more and more often  on account of its various drawbacks the use of VaR as a mean of quantifying risk  is replaced by the use of its coherent counterparts such as CVaR or AVaR. In order to cope with risk management in situations where multiple competing models are formulated for the various scenarios in $\Omega$, a convex risk measure should be utilized which is capable of combining efficiently the information provided by all partial models into an aggregate model,  filtering out possible inconsistencies of the models in $\mathcal{M}$, in order to formulate the most appropriate and robust characterization of the risk $X$ and formulate policies accordingly.

This paper proposes a new class of convex risk measures, the Fr\'echet risk measures, that can be used for this purpose, and which optimally combine all (partial) information sources into a single model and provide reliable and robust with respect to model uncertainty estimates to the risk position of the firm. The construction of the proposed risk measures requires a reliable measure for the distance between different probabilistic models (corresponding to different opinions, estimates of information sources) and since all models are assumed to be representations of the same phenomenon, high dispersion of models must be penalized.  However, such a task requires the generalization of the concept of dispersion for the space of probability measures (representing here possible probability models), which is not a vector space thus invalidating the usual definition of the mean and consequently the dispersion.  To this end we treat the space of probability measures (the natural topological space setting for probability models) as a metric space with a chosen metrization which allows us to extend the concept of  the mean and the dispersion in terms of the concept of the Fr\'echet mean and the Fr\'echet  dispersion which are then defined through a variational  formulation. We can then use the robust representation of convex risk measures, due to \cite{follmer2002robust} and \cite{frittelli2002putting}, combined with a properly chosen penalty function on the set of probability measures. This penalty function is related with the Fr\'echet variance of the set of priors $\mathcal{M}$, and penalizes extreme scenarios. In this way, we obtain the a novel class of risk measures which can be used in order to filter out uncertainty of the probabilistic models of the risk under consideration and provide a robust and easily computable estimation of the total risk incurred.

Clearly, the exact form of the proposed Fr\'echet risk measures depends on the choice of metrization for the space of probability measures. One particularly popular metrization which is also adopted in this paper is the Wasserstein metric, while the use of pseudo-metrics such as the Kullback-Leibler divergence is also a popular choice which is also investigated in this work. We provide versions and explicit calculations of the Fr\'echet risk measures for such choices, with a special emphasis on the case of the Wasserstein metric for which recent research has shown that it is  a very successful metric for the space of probability measures. For example, the Wasserstein metric has been used as a loss functional for learning schemes and extensive empirical and theoretical studies have shown that the use this metric as a distance functional provides superior results as compared to other metrics or pseudo-metrics such as Kullback-Leibler divergence (see e.g. \cite{py2016a}), in the space of probability measures. Furthermore, the extension of the mean in the metric space of probability measures, called the Wasserstein barycenter, has been used as a tool for model selection with very satisfying results (\cite{py2016a}). As already mentioned, the proposed risk measures in many cases can be estimated explicitly in semi-analytic form, thus providing a very convenient tool for practitioners to evaluate and analyse risk positions. Their use in insurance is illustrated using two indicative applications, in the problem of capital allocation and in insurance premia calculations.



\section{  Fr\'echet Risk Measures : Definitions and Properties}\label{Section2}

In this section we introduce the concept of Fr\'echet risk measures, a novel concept that may find important applications in a variety of concrete risk management problems, related to multi-agent decisions or multiple and possibly conflicting priors concerning the probability measure governing the phenomenon.

\subsection{\textit{\textbf{ The General Class of Fr\'echet Risk Measures}}}\label{CFRM}

Let $Z = (Z_1,\ldots, Z_d)$ be  a  set of common  stochastic risk factors affecting all possible ventures or positions of the firm. We identify the set of possible states of the world (i.e. states of the economy) with $\R^d$ and hereafter consider $\Omega = \R^d$. We assume that the description of the economy through the stochastic factors $Z$ is sufficient in the sense that for any possible position of the firm $X$ there exists a Borel measurable map $\Phi_0:\Omega = \R^d\to \R$ such that $X=\Phi_0(Z)$ \footnote{Or as is sometimes more common in notation $-X=\Phi_0(Z)$.}. Therefore knowledge of the state of the stochastic factor $Z$ provides a complete description of the actual value of the position $X$  and using the composite mapping $\Phi_0\circ Z$ the position of the firm can be considered as a random variable $X:\Omega=\R^d \to \R$. This approach is the standard approach to the modern theory of quantitative risk management (see e.g. \cite{mcneil2015quantitative}).

We will assume that a number of possible priors are available concerning the distribution of the stochastic risk factors $Z$, in terms of probability measures on ${\cal B}(\R^d)$, the Borel $\sigma$-algebra on $\Omega=\R^{d}$. We will denote the set of these priors by $\M=\{\mu_i\}_{i =1, \ldots, n}$, which for the sake of simplicity, are assumed to be absolutely continuous with respect to the multidimensional Lebesgue measure. Each model may contain parts of the reality and perhaps better represents particular aspects of the distribution of $Z$, while none of these may in principle be the true model for $Z$. In some sense $\M$ contains the available information for $Z$ which may be partial due to incomplete observations, etc. Therefore, when trying to assign a model for the description of $Z$ we cannot simply single out a particular model out of $\M$ but rather we need to use an appropriate combination of the elements of $\M$, designed in such a way that the partial information contained in each model in $\M$ is aggregated so as to get as complete as possible a probabilistic description for $Z$. Naturally, the uncertainty on the probability measures concerning $Z$ reflects on uncertainty regarding the probability measures concerning the position of the firm $X$.

To deal with model uncertainty concerning $Z$ one can use the concept of convex risk measures and in particular employ results on their robust representation, in order to assign to each risk position $X$ a real number quantifying the risk, constructing a functional $\rho:\mathcal{L} \to \R$ where $\mathcal{L}$ is a suitable vector space containing all possible random variables $X:\Omega \to \R$ under consideration (where $\Omega = \R^d$). This mapping must satisfy the axiomatic framework of convex risk measures (see \cite{follmer2002robust}), but at the same time must also take into account the fact that there are multiple plausible models, none of which can in principle a priori be discarded as a description of the stochastic factors $Z$. In this section we will propose a class of convex risk measures, the Fr\'echet risk measures, which may provide a useful framework for the quantification of risk in situations as the ones described above.

Dispersion in model space (which for the purpose of this work coincides with the space of probability measures) can be quantified using the concept of {\it Fr\'echet variance}, in the space of probability measures ${\cal P}(\Omega)$ appropriately metrized by a metric $\d$. The concept of the Fr\'echet variance (and the related Fr\'echet mean) has been first introduced by Maurice Fr\'echet (see e.g. \cite{frechet1948elements} and \cite{izem2007analysis} who also provided a useful decomposition of the Fr\'echet variance), in an attempt to generalize the concepts of the variance and the mean for random variables taking values not on a vector space but rather in general metric spaces not admitting a linear structure (as for example the space of probability measures). This concept has its roots in differential geometry but over the last years has found many applications in various areas such as decision theory, probability theory, functional data analysis, image processing and other fields. This notion of mean is based upon a variational argument, and in particular on the minimization of the {\it Fr\'echet function} which for any set of weights $w=(w_1,\ldots, w_n) \in \Delta^{n-1}$, where $\Delta^{n-1}$ is the positive unit simplex of $\R^n$, is defined as the mapping ${\mathbb F}_{\M} : {\cal B}(\R^d) \to \R_{+}$,
\begin{eqnarray*}
{\mathbb F}_{\M}(\mu):=\sum_{i=1}^{n}w_{i}\d^2(\mu,\mu_i),
\end{eqnarray*}
where $\d$ is a metric on the space ${\cal P}(\R^d)$, the space of probability measures on $\Omega=\R^d$. The minimum of this function over ${\cal P}(\R^d)$ is called the {\it Fr\'echet variance} of the set of priors $\M$ and will be denoted by
\begin{eqnarray*}
V_{\M} := \min_{\mu \in {\cal P}(\R^d)} {\mathbb F}_{\M}(\mu).
\end{eqnarray*}
We will hereafter use the notation $F_{\M}$ for the {\it normalized Fr\'echet function}
\begin{eqnarray}\label{mff}
F_{\M}(\mu)={\mathbb F}_{\M}(\mu) - V_{\M} \ge 0.
\end{eqnarray}
The normalized Fr\'echet function is a positive valued  convex function on the space of probability measures. The minimizer 
\begin{eqnarray}\label{fr-mean}
\mu_{B} := \arg\min_{\mu \in\mathcal{P}(\R^d)}F_{\M}(\mu)=  \arg\min_{\mu \in\mathcal{P}(\R^d)}  \sum_{i=1}^n w_i \d^2(\mu, \mu_i),
\end{eqnarray}
is called the {\it Fr\'echet mean} (or the barycenter) of $\M$, whereas the {\it Fr\'echet variance} of $\M$  is defined as $V_F(\M):=F_{\M}(\mu_{B})$. The Fr\'echet mean may not be unique, however, for the class of measures considered here uniqueness can be proved (see e.g. \cite{afsari2011riemannian,arnaudon2013medians, kroshnin2017fr}).

The Fr\'echet mean  $\mu_B$ can be considered as the aggregate model for the particular set of priors $\M$ and is the appropriate generalization of the least-square estimator (in the sense of the metric $\d$) in the space of probability measures. The choice of the weight vector depends on the credibility of the source from which each prior originates from; we may choose all weights equal if all priors are equally acceptable or if no knowledge about the reliability or the performance of the competing models exists. Other choices are possible, reflecting variability of the credibility of the various priors, based perharps on experience as to their performance (e.g. using a learning procedure similar to that  presented in \cite{py2016a}). The Fr\'echet mean also depends on the choice of metric. In view of the discussion in the introduction, an appropriate choice for the metric $\d$ is the Wasserstein metric and most of the results in this work will focus on the risk measures induced by this choice of metric, and how the aggregation of models affects the determination of the risk assigned to the position in terms of the chosen risk measure. 

We recall the  general robust representation of convex risk measures
\begin{equation}\label{robust}
\rho(X) = \sup_{\mu\in\mathcal{P}(\Omega)} \left(\mathbb{E}_{\mu}[-X] - a(\mu) \right),
\end{equation}
where $\mathcal{P}(\Omega)$ is the set of probability measures  on the set of states of the world $\Omega$ \footnote{ identified here with $\R^{d}$}, and $a :{\cal P}(\Omega) \to \R$ a convex  penalty function that penalizes certain extreme scenarios,
proposed in \cite{follmer2002robust} and \cite{frittelli2002putting}. Since our aim is in filtering out the information concerning the risk $X$ available in the set of priors for the stochastic factors $Z$, $\M$, we propose to use the Fr\'echet function $F_{\M}$ as a part of the penalty function in the robust representation \eqref{robust}, so that the penalty term will effectively select as a minimizer in problem \eqref{robust} a probability measure which should be close to the Fr\'echet mean $\mu_B$ of the set of priors $\M$. We therefore introduce the novel class of Fr\'echet risk measures.

\begin{definition}[{\bf Fr\'echet risk measure}]\label{frisk} 
Let $\alpha : \R \to \R_{+}\cup\{\infty\}$ be an increasing function, such that $\alpha(0)=0$, and let $\Phi_0  : \R^{d} \to \R_{+}$ be the risk mapping connecting the stochastic factors $Z$ to the risk position $X$ of the firm. We define the {\it Fr\'echet risk measure} for any $\gamma\in(0,\infty)$ as
\begin{eqnarray}\label{frechet-risk}
\RF(X) := \sup_{\mu \in \mathcal{P}(\R^d)} \left\{ \mathbb{E}_{\mu}[-X] -\frac{1}{2 \gamma} \alpha ( F_{\M}(\mu) ) \right\},
\end{eqnarray}
where  $\M$ is the set of priors for  $Z$, $-X=\Phi_0(Z)$ and $F_{\M}$ the normalized Fr\'echet function defined in \eqref{mff}.
\end{definition}

The following proposition collects some properties of  Fr\'echet risk measures. We will use the explicit notation $\RF(X)=\RF(X ;\gamma)$ to emphasize the dependence on the parameter $\gamma$.

\begin{proposition}[Properties of  Fr\'echet risk measures]\label{Prop2.2} \hfill
Consider the measurable space $(\Omega,\mathcal{F})$ where here $\Omega = \R^d$ and $\mathcal{F} \subseteq \mathcal{B}(\R^d)$ a given $\sigma$-algebra.
\begin{itemize}
\item[(i)] The Fr\'echet risk measures considered as mappings $\RF : {\mathcal L} \to \R_{+}$, where ${\mathcal L} := \{ X:\Omega \to \R: X\,\, \mathcal{F}-\mbox{measurable }\}$ are convex risk measures, such that $\RF(X ;\gamma) \ge {\mathbb E}_{\mu_{B}}[-X]$, for all $\gamma \ge 0$, where $\mu_{B}$ is the Fr\'echet barycenter of $\M$ (defined in \eqref{fr-mean}).
\item[(ii)] For any fixed $-X=\Phi_0(Z)$, it holds that $\RF(X ; \gamma_1) \le \RF(X ;\gamma_2)$ for any $0 \le \gamma_1 \le \gamma_2$, while
\begin{eqnarray*}
\lim_{\gamma \to 0^{+}} \RF(X ;\gamma) = {\mathbb E}_{\mu_{B}} [ \Phi_0(Z)]  \le \RF(X ; \gamma) \le {\rm ess}\sup_{Z \in \Omega=\R^{d}}  \Phi_0(Z),
\end{eqnarray*}
where $\mu_{B}$ is the Fr\'echet barycenter of $\M$.
\end{itemize}
\end{proposition}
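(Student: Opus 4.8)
The plan is to read the defining formula \eqref{frechet-risk} as the special instance of the robust representation \eqref{robust} obtained with penalty $a(\mu)=\tfrac{1}{2\gamma}\alpha(F_{\M}(\mu))$, and to exploit only two structural facts: that $a\ge 0$ (because $\alpha\ge 0$) and that the penalty vanishes precisely at $\mu_B$ (because $F_{\M}(\mu_B)=0$ and $\alpha(0)=0$). For part (i) I would first note that $a$ is a nonnegative penalty, so \eqref{frechet-risk} is a genuine functional of the type \eqref{robust} studied in \cite{follmer2002robust,frittelli2002putting}. Since for each fixed $\mu$ the map $X\mapsto \mathbb{E}_{\mu}[-X]-a(\mu)$ is affine in $X$, the three convex-risk-measure axioms follow directly and uniformly in $\mu$: monotonicity from positivity of $\mu$, cash-invariance from $\mathbb{E}_{\mu}[-(X+c)]=\mathbb{E}_{\mu}[-X]-c$, and convexity in $X$ because a pointwise supremum of affine functionals is convex (note no convexity of $\alpha$ or of $F_{\M}$ is needed for this). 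That $\RF$ maps into $\R_{+}$ follows from $\Phi_0\ge 0$. The lower bound is then obtained by inserting the admissible choice $\mu=\mu_B$ into the supremum: $F_{\M}(\mu_B)=0$ and $\alpha(0)=0$ kill the penalty, giving $\RF(X;\gamma)\ge \mathbb{E}_{\mu_B}[-X]$ for every $\gamma>0$ (and, through the limit in part (ii), also at $\gamma=0$).

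For part (ii) the monotonicity in $\gamma$ I would obtain termwise: for fixed $\mu$ the penalty $\tfrac{1}{2\gamma}\alpha(F_{\M}(\mu))$ is nonincreasing in $\gamma$ since $\alpha(F_{\M}(\mu))\ge 0$, so every term inside the supremum is nondecreasing in $\gamma$ and hence so is the supremum, yielding $\RF(X;\gamma_1)\le \RF(X;\gamma_2)$ for $0\le\gamma_1\le\gamma_2$. The right-hand bound follows by discarding the nonnegative penalty and using $\mathbb{E}_{\mu}[\Phi_0(Z)]\le \operatorname{ess\,sup}_{Z}\Phi_0(Z)$ for every $\mu$ and then taking the supremum, while the middle inequality $\mathbb{E}_{\mu_B}[\Phi_0]\le \RF(X;\gamma)$ is exactly the lower bound already established in (i).

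The limit as $\gamma\to 0^{+}$ is the substantive point. Monotonicity already forces $\lim_{\gamma\to 0^{+}}\RF(X;\gamma)=\inf_{\gamma>0}\RF(X;\gamma)\ge \mathbb{E}_{\mu_B}[\Phi_0]$, so only the reverse inequality remains. I would take $\gamma_k\downarrow 0$ together with near-maximizers $\mu_k$, and combine the lower bound of (i) with the uniform bound $\mathbb{E}_{\mu_k}[\Phi_0]\le \operatorname{ess\,sup}\Phi_0=:M$ to deduce $\tfrac{1}{2\gamma_k}\alpha(F_{\M}(\mu_k))\le M-\mathbb{E}_{\mu_B}[\Phi_0]+o(1)$, hence $\alpha(F_{\M}(\mu_k))\to 0$ and, since $\alpha$ is increasing with $\alpha(0)=0$, $F_{\M}(\mu_k)\to 0$; dropping the penalty then gives $\RF(X;\gamma_k)\le \mathbb{E}_{\mu_k}[\Phi_0]+o(1)$, so it suffices to show $\mathbb{E}_{\mu_k}[\Phi_0]\to \mathbb{E}_{\mu_B}[\Phi_0]$. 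This last passage is the main obstacle and is where the geometry of the chosen (Wasserstein) metric enters: one needs coercivity and uniqueness of the barycenter, namely that the sublevel sets $\{F_{\M}\le\delta\}$ collapse to $\{\mu_B\}$ as $\delta\to 0$ (using lower semicontinuity and strict convexity of $F_{\M}$ together with the cited uniqueness of $\mu_B$), and continuity of $\mu\mapsto\mathbb{E}_{\mu}[\Phi_0]$ along the resulting convergent sequence, which holds for bounded continuous $\Phi_0$ under weak convergence and for $\Phi_0$ of suitable growth under $W_{p}$-convergence. I would also flag that finiteness of $\operatorname{ess\,sup}\Phi_0$ is genuinely used to bound the penalty term, so the cleanest standing assumption for the limit is that $\Phi_0$ is bounded.
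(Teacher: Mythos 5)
Your proposal is correct and follows the same skeleton as the paper's proof (interpret \eqref{frechet-risk} as the robust representation \eqref{robust} with penalty $a(\mu)=\tfrac{1}{2\gamma}\alpha(F_{\M}(\mu))$, plug in $\mu_B$ for the lower bound, compare termwise for monotonicity in $\gamma$), but it diverges from the paper in two places, both to your advantage. In part (i) the paper does not verify the axioms directly: it observes that convexity of $F_{\M}$ plus monotonicity of $\alpha$ make the penalty convex and then invokes the representation theorem of \cite{follmer2002robust}; your direct verification (supremum of functionals affine in $X$) is more elementary and correctly notes that convexity of the penalty is irrelevant for convexity in $X$ --- though the paper's route costs nothing extra since convexity of the penalty is available anyway. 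The substantive difference is the limit $\gamma\to 0^{+}$: the paper's argument is essentially the heuristic that $-\tfrac{1}{2\gamma}\alpha(F_{\M}(\mu))\to-\infty$ pointwise for every $\mu\neq\mu_B$ while vanishing at $\mu_B$, ``combined with monotonicity''; as you implicitly recognize, pointwise divergence of the integrand does not by itself control a supremum over an infinite-dimensional set, so your near-maximizer argument (uniform bound forces $\alpha(F_{\M}(\mu_k))\to 0$, hence $F_{\M}(\mu_k)\to 0$, hence $\mu_k\to\mu_B$ by coercivity and uniqueness, hence $\E_{\mu_k}[\Phi_0]\to\E_{\mu_B}[\Phi_0]$ by continuity) is a genuine completion of a gap the paper leaves open, at the price of the extra hypotheses you honestly flag (bounded $\Phi_0$, coercivity and lower semicontinuity of $F_{\M}$, continuity of $\mu\mapsto\E_{\mu}[\Phi_0]$ along the relevant convergence --- all of which do hold in the $W_2$ setting with absolutely continuous priors). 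One caveat applies equally to you and to the paper: the step ``$\alpha(F_{\M}(\mu_k))\to 0$ implies $F_{\M}(\mu_k)\to 0$'' requires $\alpha(z)>0$ for $z>0$, not merely that $\alpha$ is increasing with $\alpha(0)=0$; for the indicator-function choice of $\alpha$ that the paper itself admits later (the constraint risk measures), the penalty is $\gamma$-independent and the limit claim of part (ii) is false, so strict positivity of $\alpha$ away from zero should be added as a standing assumption for that part.
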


\begin{proof}
For the proof of the Proposition please see Appendix \ref{A.1}.
\end{proof}

The above proposition implies that the parameter $\gamma$ plays the role of an uncertainty aversion parameter, and has a similar interpretation as the relevant parameter in the entropic risk measure (see e.g. \cite{follmer2011entropic, ahmadi2012entropic}). Furthermore, if $-X$ is considered as a loss so that $-X\geq 0$ a.s. we have that $\rho_F(X;\gamma)\geq 0$ for every $\gamma \in (0,\infty)$.

In view of the discussion above, this class of convex risk measures filters out effectively multiple information and uncertainty and provides reliable representation of the risk using an aggregate model. This type of risk measures can also be characterized as \emph{minimum variance} risk measures, in the sense that they are robust risk measures centered around the corresponding Fr\'echet mean of the set of priors. Because of the geometric interpretation of the Fr\'echet mean, in some sense, the family of Fr\'echet risk measures introduced here can be called a {\it barycentric} risk measure.


The freedom in the choice of $\alpha$ allows for a wide variety of risk measures through Definition \ref{frisk}. For example, if $\alpha$ is chosen so that $\alpha(F_{\M}(\mu))={\mathbb I}_{C}(\mu)$, the indicator function (in the sense used in convex analysis i.e. ${\mathbb I}_{C}(x) = 0$, if $x\in C$ and ${\mathbb I}_{C}(x) = \infty $ if $x\notin C$ ) of a convex set $C \subset \P(\Omega)$, e.g. $C=\{ \mu \in \P(\R)  \, : \, F_{\M}(\mu) \le \zeta\}$, we obtain the subclass of constraint risk measures. A particular case of interest is the choice $\alpha(z)=\frac{1}{\gamma}z$, $\gamma>0$, leading to the class of the {\it multiplier risk measures}, with multiplier $\gamma$ being interpreted as a measure of the risk manager's ambiguity aversion. Other choices are of course possible.

By definition, a Fr\'echet risk measure depends on the choice of the metric $\d$. In the remainder of the paper, we focus on the use of Wasserstein metric for the metrization of ${\cal P}(\Omega)={\cal P}(\R^{d})$, however for the sake of comparison we consider some extensions in terms of popular pseudo-metrics such as the Kullback-Leibler divergence.


\subsection{ \textit{\textbf{Wasserstein Barycentric Risk Measures}}}

We now define a special class of Fr\'echet  risk measures, the Wasserstein Barycentric risk measures. This corresponds to the choice $\d(\mu_1,\mu_2) = W_p(\mu_1,\mu_2)$, $p>1$ with $W_p(\cdot,\cdot)$ being the p-Wasserstein distance defined as
\begin{eqnarray*}
\mathcal{W}_p(\mu_1, \mu_2) :=\left\{ \inf  \left(\int_{\R^d \times \R^d} |x - y|^p d\nu(x,y) \,\,\, : \,\,\, \nu \in\Pi(\mu_1,\mu_2)    \right) \right\}^{1/p},
\end{eqnarray*}
where $\Pi(\mu_1,\mu_2)$ denotes the set of all transport plans between $\mu_1$ and $\mu_2$, i.e.,  all measures on 
$\Omega\times \Omega=\R^d \times \R^d$ with marginals $\mu_1$ and $\mu_2$. Endowed with this metric the space of probability measures with finite $p$-moments, becomes a complete metric space which will be denoted by $\P_p(\R^d)$. It is well known (\cite{villani2003topics}) that this distance provides a metrization of the space of probability measures compatible with the weak*-topology, and recent research (\cite{py2016a}) has shown that the use of this distance allows for better aggregation of probability models as compared to other choices of metrics or pseudo-metrics. In the case where $\Omega=\R$, the solution to this problem can be expressed in terms of the generalized inverses of the distribution functions $F_i$ corresponding to the measures $\mu_i$, i.e. the quantiles $g_i$, and the Wasserstein distance admits the convenient integral expression 
\begin{equation}\label{Qrep}
\mathcal{W}_p(\mu_1, \mu_2) = \left\{\int_0^1 |g_1(s) - g_2(s)|^p ds \right\}^{1/p}.
\end{equation}

The special case where $p=1$, and the distribution is a discrete distribution on the real line is related to the Gini distance.
 For a detailed account of the properties and history of the Wasserstein distance see \cite{santambrogio2015optimal} and \cite{villani2003topics}. The Wasserstein barycenter has been well studied in the particular case where $p=2$ and its existence and uniqueness has been proved in the case where probability measures are absolutely continuous with respect to the Lebesgue measure on $\R^d$ (\cite{agueh2011barycenters}).  Furthermore, for the case $p=2$ closed form solutions for the Wasserstein barycenter can be obtained for important families of probability measures on $\R^{d}$, rendering the use of the Wasserstein metric very attractive. For the above reasons we will focus our attention to this particular case and only present a brief discussion of the general case $p \ne 2$.

For the special choice of the metric $\d(\cdot, \cdot)=W_2(\cdot, \dot)$ in Definition \ref{frisk} we obtain the definition of the Wasserstein barycentric risk measures.

\begin{definition}[\textbf{\textit{Wasserstein Barycentric Risk Measure}}]\label{wrm} \hfill

Given a risk $X : \Omega=\R^d \to \R$, a set of priors $\M=\{ \mu_1,..., \mu_n \}$, a set of weights $w=(w_1,\ldots, w_n) \in \Delta^{n-1}$, and a multiplier $\gamma\in(0,\infty)$, we define 
 the \textit{Wasserstein barycentric risk measure} $\RW$ by 
\begin{eqnarray}\label{rm-1}
\RW(X) := \sup_{\mu \in \P(\Omega)} \left\{  \mathbb{E}_\mu[-X] - \frac{1}{2 \gamma} F_{\M}(\mu) \right\},\nnb \\
F_{\M}(\mu) := \sum_{i=1}^n w_i W_2^2(\mu,\mu_i) - V_{\M}, \\
V_{\M}:= \inf_{\mu \in \P(\Omega)} \sum_{i=1}^n w_i W_2^2(\mu,\mu_i)\nnb
\end{eqnarray} 
The risk measure depends on the choice of  $\gamma$ and $w$ but the dependence is not made explicit in order to ease notation.  
\end{definition}

Following up on the discussion based on the results of Proposition \ref{Prop2.2}, the positive real number $\gamma >0$  in Definition \ref{wrm} 
 plays the role of a  multiplier that quantifies how much the variance of the priors in $\M$ is penalized in the determination of the risk measure for $X$. The smaller the value of  $\gamma$ is, the more severely is the variance of priors penalized, therefore, driving the minimizer of problem \eqref{rm-1} as close as possible to the  Fr\'echet mean of $\M$ (barycenter) as quantified in terms of the $2$-Wasserstein distance. In this sense, $\gamma$ can be thought of as a measure of the uncertainty aversion of the risk manager, since an uncertainty averse agent would favor a unique model, therefore likely being hostile to the existence of many and possibly conflicting models, which is quantified by large Fr\'echet variance. 

The variational problem \eqref{rm-1} is a well-posed variational problem, which defines a convex risk measure that enjoys the general properties of Fr\'echet risk measures presented in Proposition \ref{Prop2.2}.

\begin{remark}[{\bf Coherent version of Wasserstein barycentric risk measures}] \hfill

\noindent One can also consider the closely related class of coherent Wasserstein barycentric risk measures, which can be defined as 
\begin{eqnarray}\label{wbc}
\rho_{\eta }(X)=\sup_{\mu \in C_{\eta}  }{\mathbb E}_{\mu}[-X],
\end{eqnarray}
where $C_{\eta} \subset \P_{2}(\R^d)$ is the convex set
 $C_{\eta} := \{  \mu \in\P_2(\R^d)  :  \sum_{i=1}^n w_i W_2^2(\mu, \mu_i) \leq \eta \}$. These measures are coherent  according to the definition proposed in the pioneering work of \cite{artzner1999coherent}. Problem \eqref{wbc} can be reduced to problem \eqref{rm-1} using the Lagrange multiplier formulation. 
\end{remark}

\begin{remark}[Extension of Definition \ref{wrm}]\label{GENERALIZED-WRM} One of course may choose to metrize  ${\cal P}(\R^d)$ using other Wasserstein metrics that $W_{2}$, depending on the properties of the  probability measures considered. A possible extension of Defnition \ref{wrm} may be to define
\begin{eqnarray*}\label{rm-1-1}
\RW(X) := \sup_{\mu \in \P(\Omega)} \left\{  \mathbb{E}_\mu[-X] - \frac{1}{q \gamma} F_{\M}(\mu) \right\},
\end{eqnarray*} 
for the choice $F_{\M}(\mu) := \sum_{i=1}^n w_i W_p^q(\mu,\mu_i) - V_{\M}$ and $V_{\M}:= \inf_{\mu \in \P(\Omega)} \sum_{i=1}^n w_i W_p^q(\mu,\mu_i) $, and $p>1$, $q>1$.  Such choices may lead to well posed variational problems but have to be treated numerically. Furthermore, one may choose more general probability spaces $\Omega$, keeping of course in mind to check for the existence of the Wasserstein barycenter in this more general setting.
\end{remark}

We now present some results concerning the calculation of the convex risk measures $\RW$ in an almost closed form for some cases which often occur in real life applications.

\subsubsection{The case of a single random factor}

First, we consider the case where $d=1$ i.e. the case where the position $X$ is affected only through one risk factor $Z$. 

\begin{proposition}\label{Prop2.4}
Let $\M=\{\mu_1,\ldots,\mu_n\}$ be a set of alternative prior probability models. Let $\Omega=\R$, $-X = \Phi_0(Z)$ where $Z:\R\to\R$ is considered as the only random factor that affects the risk $X$,
and $\{g_{i}\}_{i=1}^n$ the corresponding quantile functions of the models for $Z$ in $\M$. The following results hold.

\noindent (a). The variational problem \eqref{rm-1}  is equivalent to the variational problem
\begin{eqnarray}\label{rep1}
\RW(X)=\max_{g \in \mathbb{S}} \left\{  \int_{0}^{1} \left( \Phi_0(g(s)) - \frac{1}{2\gamma}  (g(s)-g_{B}(s) )^{2} \right)  ds \right\}
\end{eqnarray}
where $\mathbb{S}$ is the space of quantile functions and  $g_B(s) := \sum_{i=1}^n w_i g_i(s)$,  which is well posed.

\noindent (b). If $\Phi_0$ is smooth enough, $\gamma$ is small enough and we consider continuous distributions for the priors the maximizer $g$ of \eqref{rep1} can be obtained in terms of the quantile function by solving the equation
\begin{equation}\label{1D-FOC}
\Lambda(g) := g(s) -\gamma \frac{d}{dz}\Phi_0( g(s) ) - g_B(s) = 0.
\end{equation}
\end{proposition}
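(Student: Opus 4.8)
The plan is to exploit the fact that for $d=1$ the entire problem linearises in quantile space. First I would pass from measures to quantile functions via the one-dimensional representation \eqref{Qrep}: if $g$ is the quantile function of $\mu$, then $W_2^2(\mu,\mu_i)=\int_0^1 (g(s)-g_i(s))^2\,ds$, and, writing $Z=g(U)$ with $U$ uniform on $(0,1)$, the loss term becomes $\mathbb{E}_\mu[-X]=\mathbb{E}_\mu[\Phi_0(Z)]=\int_0^1 \Phi_0(g(s))\,ds$. The map sending $\mu\in\P_2(\R)$ to its quantile function is a bijection onto the set $\mathbb{S}$ of non-decreasing left-continuous elements of $L^2(0,1)$, so the supremum over $\mu$ in \eqref{rm-1} is exactly a supremum over $g\in\mathbb{S}$.

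The key algebraic step is a bias--variance decomposition of the penalty. Since $w\in\Delta^{n-1}$ and $g_B(s)=\sum_i w_i g_i(s)$, the cross term vanishes pointwise and I obtain $\sum_i w_i (g(s)-g_i(s))^2=(g(s)-g_B(s))^2+\sum_i w_i(g_B(s)-g_i(s))^2$. Integrating in $s$ splits $\sum_i w_i W_2^2(\mu,\mu_i)$ into $\int_0^1 (g(s)-g_B(s))^2\,ds$ plus a $\mu$-independent remainder; because the pointwise minimiser of the first piece is $g=g_B$, that remainder is precisely the Fr\'echet variance $V_{\M}$ and the barycenter of \eqref{fr-mean} has quantile function $g_B$. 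Hence $F_{\M}(\mu)=\int_0^1 (g(s)-g_B(s))^2\,ds$ and substitution into \eqref{rm-1} yields the integral functional in \eqref{rep1}. For well-posedness I would work in the closed convex subset $\mathbb{S}\subset L^2(0,1)$: assuming $\Phi_0$ grows subquadratically, the strictly concave penalty $-\frac{1}{2\gamma}\|g-g_B\|_{L^2}^2$ dominates and gives coercivity, so a maximising sequence $\{g_k\}$ is $L^2$-bounded; being monotone, Helly's selection theorem extracts an a.e.-convergent subsequence with monotone limit $g^\star\in\mathbb{S}$. Vitali's theorem (uniform integrability of $\{\Phi_0(g_k)\}$ from the $L^2$ bound and the growth of $\Phi_0$, plus a.e. convergence) handles $\int_0^1\Phi_0(g_k)\,ds$, while Fatou's lemma gives lower semicontinuity of the quadratic penalty; together these make the objective upper semicontinuous along the sequence, so the supremum is attained and \eqref{rep1} is a genuine maximum.

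For part (b) the idea is that the objective in \eqref{rep1} is a separable integral $\int_0^1 L(s,g(s))\,ds$ with $L(s,z)=\Phi_0(z)-\frac{1}{2\gamma}(z-g_B(s))^2$, so I would first maximise pointwise in $z$ and then verify that the resulting profile respects the monotonicity constraint defining $\mathbb{S}$. Smoothness of $\Phi_0$ and small $\gamma$ guarantee $\partial_{zz}L=\Phi_0''(z)-\gamma^{-1}<0$, i.e. $L(s,\cdot)$ is strictly concave, so its unique maximiser is the stationary point $\partial_z L=\Phi_0'(z)-\gamma^{-1}(z-g_B(s))=0$, which rearranges to $g(s)-\gamma\,\Phi_0'(g(s))-g_B(s)=0$, that is \eqref{1D-FOC}. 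To check admissibility, set $H(z)=z-\gamma\Phi_0'(z)$; small $\gamma$ makes $H'(z)=1-\gamma\Phi_0''(z)>0$, so $H$ is a strictly increasing bijection and the solution is $g(s)=H^{-1}(g_B(s))$. Since $g_B$ is non-decreasing (a convex combination of quantile functions, continuous when the priors are continuous) and $H^{-1}$ is increasing, $g$ is a bona fide element of $\mathbb{S}$; the constraint is therefore inactive and the pointwise maximiser coincides with the global maximiser of \eqref{rep1}.

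The main obstacle I anticipate is not the Euler--Lagrange computation but two feasibility issues. The delicate point in (a) is passing to the limit in $\int_0^1\Phi_0(g_k)\,ds$ along a maximising sequence, since weak $L^2$-convergence alone does not control the nonlinearity $\Phi_0$; this is exactly where the monotonicity of the $g_k$ (hence Helly and pointwise convergence) must be invoked, together with a growth bound on $\Phi_0$ to secure uniform integrability. The delicate point in (b) is precisely the hypothesis that $\gamma$ be small: it is needed twice, once to render $L(s,\cdot)$ concave (so the stationary point is a maximum rather than a saddle) and once to make $H$ strictly increasing (so the pointwise maximiser is non-decreasing and the monotonicity constraint does not bind). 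Were $\gamma$ large, the unconstrained profile could fail to be monotone and the true maximiser would solve a constrained problem with flat comonotone regions, so the clean characterisation \eqref{1D-FOC} would hold only on the set where the constraint is slack.
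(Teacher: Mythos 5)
Your proposal is correct, but it reaches both conclusions by a route genuinely different from the paper's. The passage to quantile space and the bias--variance splitting of the penalty (the cross term vanishing because $\sum_i w_i(g_B - g_i) = 0$, so that $F_{\M}(\mu) = \int_0^1 (g(s)-g_B(s))^2\,ds$) is the same algebra the paper performs implicitly; the divergence is in the analysis. For existence in (a), the paper runs the direct method in the weak topology of $L^2([0,1])$: it takes a norm-bounded minimizing sequence, invokes weak compactness and a weak lower semicontinuity theorem for convex integrands, and then uses Mazur's lemma plus Egorov's theorem to upgrade to strong/a.e.\ convergence so that the limit is increasing and right continuous, i.e.\ lies in $\mathbb{S}$, with uniqueness from strict convexity. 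You instead exploit monotonicity from the outset: Helly's selection theorem yields an a.e.-convergent subsequence with monotone limit, Vitali's theorem handles $\int_0^1 \Phi_0(g_k)\,ds$ and Fatou the quadratic term. Your route buys a transparent treatment of the nonlinearity --- pointwise convergence controls $\Phi_0(g_k)$ directly, which weak convergence alone would not --- at the price of an explicit subquadratic growth hypothesis on $\Phi_0$; the paper's route instead leans on convexity of the integrand $-\Phi_0 + \frac{1}{2\gamma}(\cdot - g_B)^2$ (itself guaranteed only for small $\gamma$ or suitably behaved $\Phi_0$), and its claim that minimizing sequences are norm bounded tacitly needs the same kind of growth condition you make explicit. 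For (b), the paper perturbs the maximizer by admissible variations $g + \epsilon \tilde g \in \mathbb{S}$, obtains a variational inequality, and argues that for small $\gamma$ and continuous priors both $\pm\tilde g$ are admissible, yielding \eqref{1D-FOC}; you instead maximize the separable integrand pointwise (strict concavity of $L(s,\cdot)$ for small $\gamma$) and then verify that the monotonicity constraint is inactive by showing $H = \mathrm{id} - \gamma\Phi_0'$ is strictly increasing, so that $g = H^{-1}\circ g_B \in \mathbb{S}$. Your version is the more self-contained of the two: it explains precisely why the stationary point is a maximum and why the constraint does not bind --- two points the paper leaves implicit --- and it produces the closed form $g = H^{-1}\circ g_B$ of the maximizer, which also makes vivid your correct closing observation that for large $\gamma$ the unconstrained profile may fail to be monotone and \eqref{1D-FOC} would then hold only where the constraint is slack.
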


\begin{proof}
For the proof of the Proposition please see Appendix \ref{A.2}.
\end{proof}

The following examples illustrate the use of the above proposition.

\begin{example}[{\bf Affine risk mapping}]

In the case where the risk mapping is of affine form, i.e. $\Phi(z) = \alpha + b z$, for $\alpha,b \in \R$,  an application of Proposition \ref{Prop2.4} yields  the closed form solution 
\begin{equation}\label{wassrm-l}
\RW(X) = \mathbb{E}_{\mu_B}[-X] + \frac{\gamma b^2}{2} ,
\end{equation}
where the unique maximizer is the probability model which can be represented in terms of its the quantile function $g:= g_{B} + \kappa = \sum_{i=1}^{n} w_{i} g_{i}+\kappa$ where $\kappa = b\gamma$. The maximizer in this case can be identified as the weighted quantile average (i.e. the barycenter) shifted by the multiplier $\kappa$ (left shift if $\kappa<0$ or right shift if $\kappa>0$. Note the distortion effect of the risk mapping on the maximizer and the value of risk measure.
\end{example}

\begin{example}[{\bf Quadratic risk mapping - $\Delta-\Gamma$ approximation }]
In the case where the risk mapping is of quadratic form, i.e. $\Phi(z) = \alpha + b z + c z^2$ for $\alpha, b, c \in \R$, corresponding to the popular $\Delta-\Gamma$ approximation of any risk position, an application of Proposition \ref{Prop2.4}
yields the closed form solution 
\begin{equation}\label{wassrm-q}
\RW(X) = \int_0^1 \Phi_0\left( \frac{g_B(s)+\kappa}{\lambda} \right) ds - \frac{1}{2\gamma} \int_0^1 \left( \frac{g_B(s) + \kappa}{\lambda} - g_B(s) \right)^2 ds,
\end{equation}
where the unique maximizer is the probability model represented in terms of the distorted, by the risk mapping, barycentric quantile function $g:=\frac{g_B + \kappa}{\lambda}=\frac{\sum_{i=1}^{n} w_{i} g_{i} +\kappa}{\lambda}$ where $\kappa = b\gamma$ and $\lambda = 1 - c\gamma$, which can be identified as the weighted quantile average shifted by the multiplier $\kappa$ and scaled by the multiplier $\lambda$. Note that in both cases, when $\gamma \to 0$ we recover the barycenter as the maximizer. 
\end{example}

\begin{remark}
For smooth $\Phi_0$ and $\gamma$ small enough, and in the case where $\M$ consists of continuous distributions one may obtain a perturbative expansion for the risk measure in terms of the expression
$$ \rho_W(X) = \int_0^1  \left( \Phi_0(g_B(s)) + \frac{\gamma}{2}(\Phi'_0(g_B(s))^2 \right) ds  + \mathcal{O}(\gamma^2).$$
\end{remark}

More general cases can be treated either by numerical treatment of the parametric algebraic equation \eqref{GENERALIZED-WRM}, or by direct treatment of the variational problem \eqref{rep1} by discretization or by inclusion of a penalization term  which guarantees that the solution is in ${\mathbb S}$.

\subsubsection{The case of multiple random factors: Location-Scatter family}

Now we consider the case that $d$ random factors affect the position $X$ through a risk mapping $\Phi_{0}(\cdot)$, i.e. $-X = \Phi_{0}(Z)$. We assume that the risk factors $Z=(Z_1,\ldots, Z_d)$ follow a Location - Scatter family so that $Z = m + S^{1/2} Z_0$, where $m \in \R^{d}$, $S \in {\mathbb P}(d) \subset \R^{d \times d}$ is a positive semidefinite matrix and $Z_{0}$ is a random variable on $\R^{d}$. We will use the notation ${\mathbb P}(d)$ for the set of positive semidefinite $d \times d$ matrices, and we will denote by $\nu$, the probability measure on $\R^{d}$, which corresponds to the probability distribution of the central random variable $Z_0$, which characterizes the family. We will denote such random variables by $Z \in LS(m,S)$, and slightly abusing notation we will denote the corresponding probability measure again as $\mu = LS(m,S)$.  Examples of such families are e.g. the normal family, the Student family or other families of elliptical distributions, which are widely used as models in risk management (see e.g \cite{mcneil2015quantitative}). Note that the existence of fat tails is not excluded by this choice of models, as the random variable $Z_0$ may be chosen so as to exhibit such behaviour if this is necessary.

For the case where $\M=\{\mu_i\}_{i=1}^n$, $\mu_i = LS(m_i, S_i)$, $(m_i,S_i)\in \R^d \times \mathbb{P}(d)$, the Fr\'echet function when the measure space is metrized in terms of the Wasserstein distance admits the form
\begin{eqnarray*}
{\mathbb F}_{\M}(m,S):=\overline{{\mathbb F}}_{\M}(m)+{\mathbb F}_{\M}^{0}(S):= \sum_{i=1}^{n}w_{i}\| m- m_{i} \|^2+ \sum_{i=1} w_{i} Tr\left( S + S_i -2 (S_i^{1/2} S S_{i}^{1/2})^{1/2} \right).
\end{eqnarray*}
It has been shown that the Wasserstein barycenter $\mu_{B}$ is also a measure $\mu_{B}=LS(m_{B},S_{B})$ with $(m_B,S_{B}) \in \R^{d} \times {\mathbb P}(d)$ being obtained as the solution of the matrix optimization problem
\begin{eqnarray*}
\min_{(m,S)  \in \R^{d} \times \R^{d \times d}} {\mathbb F}_{\M}(m,S)=\min_{(m,S)  \in \R^{d} \times \R^{d \times d}} \sum_{i=1}^{n}w_{i}\| m- m_{i} \|^2+ \sum_{i=1} w_{i} Tr\left(S + S_i -2 (S_i^{1/2} S S_{i}^{1/2})^{1/2} \right)
\end{eqnarray*}
It is easily seen  that $m_{B}=\sum_{i=1}^{n}w_{i} m_{i}$. Furthermore, it can be shown (see e.g. \cite{alvarez2016fixed} or \cite{bhatia2017bures}) that $S_{B}$ is the solution of the matrix equation 
\begin{eqnarray}\label{BAR-EQUATION}
S_{B}=\sum_{i=1}^{n}w_i (S_{B}^{1/2} S_{i}S_{B}^{1/2})^{1/2}.
\end{eqnarray}
The Fr\'echet variance breaks into two contributions $V_{\M}=\overline{V}_{\M}+V^{0}_{\M}$, where 
\begin{eqnarray*}
\overline{V}_{\M}=\overline{{\mathbb F}}_{\M}(m_{B})=\sum_{i=1}^{n} w_{i} \| m_i - m_{B}\|^{2},
\end{eqnarray*}
and $V^{0}_{\M}=\min_{S \in {\mathbb P}(d)} {\mathbb F}^{0}_{\M}(S)={\mathbb F}^{0}_{\M}(S_{B})$, through which we may  define the normalized Fr\'echet functions 
\begin{eqnarray*}
\overline{F}_{M}(m):=\overline{{\mathbb F}}_{\M}(m)-\overline{V}_{\M}, \,\, \mbox{and} \,\, F^{0}_{\M}(S):={\mathbb F}^{0}_{\M}(S)-V^{0}_{\M}.
\end{eqnarray*}

We now consider the problem of calculating the Wasserstein barycentric risk measure in the case where there is consent that the risk factors follow this general family, however there is uncertainty as to the parameters of the model, i.e., the set of priors $\M$ consists of probability measures $\mu_i \in {\cal P}(\R^d)$ such that $\mu_{i}=LS(m_i, S_{i})$, $m_i \in \R^{d}$, $S_i \in \R^{d \times d}$, $i=1,\ldots, n$. Equivalently, according to each of the priors in $\M$, the risk factor $Z \sim m_{i} + S_{i}^{1/2} Z_0$, where the $d$-dimensional random variable $Z_0$ is distributed by the probability measure $\nu \in {\cal P}(\R^{d})$.

We will need the following definitions.  Let $\Phi_0 : \R^{d} \to \R$ be a function, associated with the risk mapping of a particular position, assumed to be sufficiently smooth. Define the functions $\Phi, \Phi_{\ell}, \Psi_{\ell,k} : \R^{d} \times {\mathbb P}(d) \to \R$ by
\begin{eqnarray}\label{NEW-FUN-DEF}
\Phi(m,S):=\int_{\R^{d}} \Phi_0(m + S^{1/2} z) d\nu(z),  \nonumber\\
\Phi_{\ell}(m,S):=\int_{\R^{d}} D_{\ell}\Phi_0(m+ S^{1/2} z)   d\nu(z), \\
\Psi_{\ell, k}(m ,S):=\int_{\R^{d}} D_{\ell}\Phi_0(m+ S^{1/2} z) z_{k}  d\nu(z), \nonumber
\end{eqnarray}
and we will also use the notation
\begin{eqnarray}\label{NOTATION-0}
D_{m}\Phi(m,S)=[\Phi_{1},\ldots, \Phi_{d}]', \,\,\, D_{S}\Phi(m,S)= \frac{1}{2}\left( \Psi S^{-1/2} + S^{-1/2} \Psi^T \right).
\end{eqnarray}

\begin{proposition}\label{Prop2.8}
Assume that $\M=\{\mu_{i}, \,\,\, i=1,\ldots, n\}$ with $\mu_{i}=LS(m_{i},S_{i})$ with $m_{i} \in \R^{d}$ and $S_{i} \in {\mathbb P}(d)$, where ${\mathbb P}(d) \subset \R^{d \times d}$ is the set of positive definite and symmetric matrices. If the position of the firm is provided by the risk mapping $-X=\Phi_0(Z)$ then $\RW(X)$ is calculated as the solution of the  matrix optimization problem
\begin{eqnarray}\label{MATRIX-WAS-OPT}
\RW(X)=\sup_{(m,S) \in \R^{d} \times {\mathbb P}(d)} \left\{ \Phi(m,S)  -\frac{1}{2\gamma} \left(   \overline{F}_{\M}(m) +  F^{0}_{\M}(S) \right) \right\}
\end{eqnarray}
The maximizer $(m,S)$ to the above problem can be found as the solution of the set  of matrix equations (derived from the first order conditions):
\begin{equation}\label{FOC-A}
\begin{aligned}
\gamma D_{m}\Phi(m,S)-\left(m-\sum_{i=1}^{n}w_{i} m_{i}\right)=0 \\
2 \gamma S^{1/2} D_{S}\Phi(m,S) S^{1/2} - \left( S-\sum_{i=1}^{n} w_{i} (S^{1/2} S_{i} S^{1/2})^{1/2} \right)= 0.
\end{aligned}
\end{equation}
\end{proposition}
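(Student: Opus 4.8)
The plan is to exploit the location--scatter structure in order to collapse the infinite--dimensional variational problem \eqref{rm-1} over $\P(\Omega)$ into a finite--dimensional optimization over the parameter pair $(m,S)\in\R^{d}\times{\mathbb P}(d)$, and then to read off \eqref{FOC-A} as the first--order conditions of that reduced problem. The structural input is that, since all priors share the common central law $\nu$, for any two location--scatter laws $LS(m,S)$ and $LS(m_i,S_i)$ the optimal transport map is the affine map $z\mapsto m_i + T_i(z-m)$ with the symmetric positive--definite matrix $T_i=S^{-1/2}(S^{1/2}S_iS^{1/2})^{1/2}S^{-1/2}$, so that $W_2^2(LS(m,S),LS(m_i,S_i))=\|m-m_i\|^2+\mathrm{Tr}(S+S_i-2(S_i^{1/2}SS_i^{1/2})^{1/2})$ exactly as in the Gaussian (Bures) case. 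Combined with the fact, recorded before the statement, that the barycenter of location--scatter laws is again a law $LS(m_B,S_B)$ with $m_B=\sum_i w_im_i$ and $S_B$ solving \eqref{BAR-EQUATION}, this shows that the whole competition may be carried out within the location--scatter family.

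Restricting $\mu=LS(m,S)$ in \eqref{rm-1}, the linear term becomes $\E_\mu[-X]=\int_{\R^d}\Phi_0(m+S^{1/2}z)\,d\nu(z)=\Phi(m,S)$ by \eqref{NEW-FUN-DEF}, while the penalty splits additively as $F_{\M}(\mu)=\overline F_{\M}(m)+F^{0}_{\M}(S)$, using the decomposition of the Fr\'echet function and of $V_{\M}$ recorded before the statement. Substituting these two identities into \eqref{rm-1} turns the objective into exactly \eqref{MATRIX-WAS-OPT}. Since $\overline F_{\M}$ is a positive--definite quadratic in $m$ and $F^{0}_{\M}$ is convex on ${\mathbb P}(d)$ in the Bures geometry, for $\gamma$ small enough the penalty dominates, the reduced functional is concave and coercive in $(m,S)$, the supremum is attained, and any stationary point is the global maximizer; it therefore suffices to solve the first--order conditions.

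To obtain \eqref{FOC-A} I would differentiate \eqref{MATRIX-WAS-OPT} in $m$ and in $S$ separately. Differentiation in $m$ is immediate: $D_m\Phi(m,S)=[\Phi_1,\dots,\Phi_d]'$ from \eqref{NOTATION-0}, whereas $\nabla_m\overline F_{\M}(m)=2(m-\sum_i w_im_i)$ because $\sum_iw_i=1$; setting the $m$--gradient of the objective to zero and clearing the factor $1/(2\gamma)$ yields the first line of \eqref{FOC-A}. Differentiation in $S$ is the substantive computation: the chain rule through $S\mapsto S^{1/2}$ gives $D_S\Phi(m,S)=\tfrac12(\Psi S^{-1/2}+S^{-1/2}\Psi^{T})$ as in \eqref{NOTATION-0}, with $\Psi=[\Psi_{\ell,k}]$ from \eqref{NEW-FUN-DEF}, while the Bures--Wasserstein gradient $\nabla_S\,\mathrm{Tr}(S+S_i-2(S_i^{1/2}SS_i^{1/2})^{1/2})=I-T_i$ gives $\nabla_S F^{0}_{\M}(S)=\sum_i w_i(I-T_i)$, whose vanishing reproduces precisely the barycenter fixed point \eqref{BAR-EQUATION}. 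Setting the $S$--gradient of \eqref{MATRIX-WAS-OPT} to zero, multiplying through by $2\gamma$, and then conjugating by $S^{1/2}$ on both sides (which turns $\sum_i w_i T_i$ into $\sum_i w_i(S^{1/2}S_iS^{1/2})^{1/2}$ and $I$ into $S$) produces the second line of \eqref{FOC-A}.

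The main obstacle is the matrix calculus underlying the $S$--equation. Given the assumption that there is consent on the location--scatter family, the competition in \eqref{rm-1} is carried out within that family, so the passage to \eqref{MATRIX-WAS-OPT} is a genuine finite--dimensional restriction rather than a mere change of variables; one should note that this restriction is lossless over all of $\P(\R^d)$ precisely when $\Phi_0$ is affine or quadratic (cf. the affine and quadratic Examples following Proposition \ref{Prop2.4}), since only then are the affine optimal maps $z\mapsto m_i+T_i(z-m)$ compatible with the pointwise Euler--Lagrange condition $\nabla\Phi_0=\tfrac1\gamma\sum_iw_i(\mathrm{id}-T_i)$ of the unrestricted problem. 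The remaining delicate point is the differentiation of the Bures term $\mathrm{Tr}((S_i^{1/2}SS_i^{1/2})^{1/2})$ with respect to a symmetric positive--definite $S$; the identity $\nabla_S\,\mathrm{Tr}((S_i^{1/2}SS_i^{1/2})^{1/2})=\tfrac12 T_i$ must be established carefully, and it is the final conjugation by $S^{1/2}$, which symmetrizes the condition and reinstates the fixed--point structure of \eqref{BAR-EQUATION}, that accounts for the factor $2\gamma$ in \eqref{FOC-A}. Strict convexity of $\overline F_{\M}$ together with convexity of $F^{0}_{\M}$ on ${\mathbb P}(d)$ then ensure that the stationary pair solving \eqref{FOC-A} is the unique maximizer.
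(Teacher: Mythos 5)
Your proposal is correct and follows essentially the same route as the paper: restrict the competition to the location--scatter parameters so that the objective becomes $\Phi(m,S)-\tfrac{1}{2\gamma}\left(\overline F_{\M}(m)+F^{0}_{\M}(S)\right)$, take first-order conditions in $m$ and $S$, and conjugate the $S$-equation by $S^{1/2}$ to recover \eqref{FOC-A}. The one identity you defer as "to be established carefully," namely $\nabla_S\,\mathrm{Tr}\,(S_i^{1/2}SS_i^{1/2})^{1/2}=\tfrac12\, S^{-1/2}(S^{1/2}S_iS^{1/2})^{1/2}S^{-1/2}$, is exactly what the paper's Lemma \ref{FR-FUN-DER} and Lemma \ref{LEMMA-04} supply (implicit differentiation of the matrix square root via a Sylvester equation, then the integral-representation/geometric-mean argument of \cite{bhatia2017bures} to evaluate the trace), so your outline matches the paper's proof step for step.
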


\begin{proof}
For the proof of the Proposition please see Appendix \ref{A.3}.
\end{proof}

The first order conditions \eqref{FOC-A} may not be solved analytically (even for the case where $\gamma=0$)  but may be approximated numerically with a fixed point scheme of the form
\begin{equation}\label{FIXED-POINT-SCHEME}
\begin{aligned}
&m^{(k+1)}=\sum_{i=1}^{n}w_{i} m_{i}+\gamma D_{m}\Phi(m^{(k)},S^{(k)}), \\
&S^{(k)}S^{(k+1)} =\left( \sum_{i=1}^{n} w_{i} ((S^{(k)})^{1/2} S_{i}( S^{(k)})^{1/2} )^{1/2} + 2 \gamma (S^{(k)})^{1/2} D_{S}\Phi(m^{(k)},S^{(k)})( S^{(k)})^{1/2}\right)^2
\end{aligned}
\end{equation}
This scheme in the case where $\gamma=0$ reduces to the fixed point scheme for the calculation of the Wasserstein barycenter the convergence of which was shown in \cite{alvarez2016fixed} or \cite{bhatia2017bures}). For small enough values of $\gamma$ this fixed point scheme can be treated as a perturbation of a converging fixed point scheme which assuming sufficient smoothness for the function $\Phi$,  can be shown to converge.

\begin{remark}[{\bf A perturbative approach to the calculation of $\RW(X)$}]\label{Rem2.9}

It can be noted that the second matrix equation in \eqref{FOC-A}  is a perturbation of the matrix equation for the determination of the Wasserstein barycenter in the case of location - scatter families. This  indicates that for small $\gamma$ the solution to this system of matrix equations will be concentrated around the Wasserstein barycenter  $(m_{B},S_{B})$. Introducing the notation
\begin{equation}\label{NOTATION-123}
\begin{aligned}
M_{B}=D_{m}\Phi(m_{B},S_{B}), \,\,\, C_{B}=D_{S}\Phi(m_{B},S_{B}), \,\,\, B=S_{B}^{1/2},\\
B_{i}=S_{i}^{1/2}, \,\,\, D_{i}=S_{i}S_{B}^{1/2}, \,\,\, E_{i}=(S_{B}^{1/2} S_{i} S_{B}^{1/2})^{1/2}, \,\,\, G_{i}=(S_{i}^{1/2} S_{B} S_{i}^{1/2})^{1/2},
\end{aligned}
\end{equation}
 it  can be shown (see Appendix \ref{A.4} for details) that for small values of $\gamma$ the risk measure admits the expansion
\begin{eqnarray}\label{EXPANSION-000}
\RW(X)= \Phi(m_{B},S_{B}) + \gamma \left(\frac{1}{2} \| M_{B} \|^2 +  Tr(C_{B} \tS) + \frac{1}{2} Tr\left( \sum_{i=1}^{n} w_{i} \Z_{i}\right) \right)
\end{eqnarray}
where $\tS$ is the part of the solution $(\tS, \J, \H_{1},\ldots, \H_{n})$ of the linear system of matrix equations
\begin{equation}\label{PPP-1}
\begin{aligned}
\tS- \sum_{i=1}^{n}w_{i} \H_{i} = 2 B W  B, \\
\tS- \J B - B \J =0, \\
\J D_{i} + D_{i} \J - \H_{i} E_{i} -E_{i} \H_{i} =0, \,\,\, i=1,\ldots, n,
\end{aligned}
\end{equation}
for $W=C_{B}$ and the matrices $(\Z_1,\Z_2, \ldots, \Z_{n})$ solve the linear system of (decoupled) Sylvester equations
\begin{eqnarray}\label{PPP-2}
\Z_{i} G_{i} + G_{i} \Z_{i} = - 2\Y_{i}^2, \,\,\, i=1, \ldots, n,
\end{eqnarray}
where the right hand sides are determined by the solution of the Sylvester equations
\begin{eqnarray}\label{PPP-3}
\Y_{i} G_{i} +  G_{i} \Y_{i} = B_{i}\tS  B_{i}, \,\,\, i=1,\ldots, n .
\end{eqnarray}
The maximizing measure has mean $m=m_{B}+\gamma \tm$, and covariance matrix $S=S_{B}+\gamma \tS$.
\end{remark}

We close the discussion by providing some examples of possible risk mappings and the application of Proposition \ref{Prop2.8}.

\begin{example}[{\bf Linear risk mappings}] Linear risk mappings are quite often used in quantitative risk management as approximations of more complicated nonlinear risk mappings. One great advantage of using the linear approximation is that it allows for closed form expressions for risk measures, and as a result of that risk management procedures which are based upon the linear approximation are often used in practice. 

Consider a linear risk mapping $-X=\langle a,Z \rangle$ where $a \in \R^{d}$ and $\langle \cdot, \cdot \rangle$ denotes the standard inner product in $\R^{d}$. Then we may calculate explicitly the function $\Phi$ as
\begin{eqnarray*}
\Phi(m,S)={\mathbb E}_{\nu}[(a,m +S^{1/2} Z)]=\langle a,m \rangle.
\end{eqnarray*}
For this risk mapping $D_{m}\Phi(m,S)=a$ and $D_{S}\Phi(m,S)=0$, so that  the system \eqref{FOC-A} becomes
\begin{equation*}\label{FOC-A}
\begin{aligned}
\gamma a-(m-\sum_{i=1}^{n}w_{i} m_{i})=0, \\
S-\sum_{i=1}^{n} w_{i} (S^{1/2} S_{i} S^{1/2} )^{1/2}=0,
\end{aligned}
\end{equation*}
which can be readily solved to yield $m=m_{B}+\gamma a$ and $S=S_{B}$. Sustituting the maximizer we obtain the risk measure as
\begin{eqnarray*}
\RW(X)=\langle a,m_{B}\rangle +\frac{ \gamma}{2} \| a \|^2.
\end{eqnarray*}
 Note that this result is exact (and not perturbative), and observe the distortion on the maximizing measure and its effect on the risk measure. Note further that the same result applies whenever the risk mapping is such that $\Psi_{\ell, k}(m,S)=0$, i.e. whenever the random variables $D_{\ell}\Phi_{0}(m+ S^{1/2} Z_0) Z_{0,k}$ are centered, a case that arises for specific symmetries of the risk mapping $\Phi_{0}$.

This result can be extended for the case of a portfolio of assets described by linear risk mappings. Assuming that each asset is described by the risk mapping $-X_{k}=\langle a_{k},Z \rangle$, $a_k \in \R^{d}$, $k=1,\ldots, K$, we have that  for a portfolio $\theta=(\theta_1,\ldots, \theta_K)$ the total position is $X=\sum_{k=1}^{K} \theta_{k} X_{k}$, so that the risk mapping is $\Phi_{0}(Z)=\langle \sum_{k=1}^{K}\theta_{k} a_{k}, Z \rangle$ and the function $\Phi(m,S)=\langle \sum_{k=1}^{K}\theta_{k} a_{k}, m\rangle$. Therefore, $D_{m}\Phi(m,S)=\sum_{k=1}^{K} a_{k}$ and $D_{S}\Phi(m,S)=0$, so that the maximizer is $m=m_{B}+\gamma ( \sum_{k=1}^{K}\theta_{k} a_{k} )$, $S=S_{B}$ and the risk measure is calculated as 
\begin{eqnarray*}
\RW(X)=\sum_{k=1}^{K} \theta_{k} \langle  a_{k}, m_{B}\rangle + \frac{\gamma}{2}  \| \sum_{k=1}^{K}\theta_{k} a_{k} \|^2.
\end{eqnarray*}
\end{example}

For more general risk mappings we will also have a distortion (as compared to the Wasserstein barycenter) not only for the mean but also for the covariance matrix. The following example illustrates that.

\begin{example}[{\bf Quadratic risk mapping}] The case of quadratic risk mappings is part of a very common approximation scheme in quantitative risk management, usually described as the $\Delta-\Gamma$ approximation. Consider a risk mapping of the form $-X=\langle a,Z \rangle + \langle Z,AZ \rangle$ where $a \in \R^d$ and $A \in \R^{d \times d}$, assumed symmetric without loss of generality. For this risk mapping we may calculate
\begin{eqnarray*}
\Phi(m,S) &=& {\mathbb E}[\langle a,m+S^{1/2} Z_0\rangle + \langle m+S^{1/2} Z_0,A m + A S^{1/2} Z_0\rangle ]\\
&=& \langle a,m\rangle+\langle m,Am\rangle +Tr(S^{1/2} A S^{1/2}).
\end{eqnarray*}
We may also calculate $[D_{m}\Phi(m,S)](\tm)=\langle a+2Am,\tm \rangle$, for every $\tm \in \R^d$ and $[D_{S}\Phi(m,S)](\tS)=Tr( A \tS )$  for every $\tS \in {\mathbb P}(d)$, where we used the cyclicity of the trace. Using Proposition \ref{Prop2.8} we see that the maximizer in this case can be obtained as the solution of the system of matrix equations
\begin{equation*}
\begin{aligned}
\gamma (a + 2 A m) -(m-m_{B})=0, \\
2 \gamma  S^{1/2} A S^{1/2} -\left(S-\sum_{i=1}^{n} w_{i} (S^{1/2} S_{i} S^{1/2} )^{1/2} \right)=0.
\end{aligned}
\end{equation*}
These equations are decoupled, the first one is readily solved to yield $m = (I - 2 \gamma A)^{-1}(m_{B} +\gamma a)$, whereas the second equation has either to be solved numerically using an iterative scheme or can be approximated in a perturbative manner using the method described in Remark \ref{Rem2.9}. In this case the maximizer measure has  mean $m=m_{B} +\gamma  \tm$ where $\tm= a +2 A m_{B}$ and
covariance $S=S_{B}+ \gamma \tS$ where the matrix $\tS$ is obtained upon solving for the unknown matrices $(\tS, J, H_1,\ldots, H_n)$  the system of linear equations  \eqref{PPP-1} for the choice $W=A)$. Once this system is solved (numerically in the general case) the risk measure is approximated to first order in $\gamma$ by
\begin{eqnarray*}
\RW(X)=(a,m_{B})+(m_{B},A m_{B}) + Tr ( S_{B}^{1/2} A S_{B}^{1/2})+ \gamma ( \Gamma_1 + \Gamma_2),
\end{eqnarray*}
where
\begin{eqnarray*}
\Gamma_1=\frac{1}{2} \| m_{B} + A m_{B} \|^2, \\
\Gamma_2 = Tr(A \tS) + \frac{1}{2} Tr\left(\sum_{i=1}^{n} w_{i} Z_{i}\right),
\end{eqnarray*}
where the matrices $Z_i$, $i=1,\ldots, n$ are solutions of the system of  decoupled Sylvester equations \eqref{PPP-2}. In the case of a  portfolio $\theta=(\theta_1,\ldots, \theta_K)$ of $K$ assets or risk positions $-X_{k}= \langle a_{k},Z\rangle + \langle Z,A_{k} Z\rangle$, $a_{k} \in \R^{d}$, $A_{k} \in \R^{d \times d}$, symmetric (without loss of generality), $k=1,\ldots, K$,  the risk mapping becomes 
$\Phi(m,S)=\sum_{k=1}^{K} \theta_{k} \left[ \langle a_{k}, m\rangle + \langle m, A_{i} m\rangle + Tr(S^{1/2} A_{k} S^{1/2})\right]$ and accordingly $[D_{m}\Phi(m,S)](\tm)=\sum_{k=1}^{K} \theta_{k}\langle a_{k}+2A_{k}m,\tm\rangle$, for every $\tm \in \R^d$ and 
$$[D_{S}\Phi(m,S)](\tS)=\sum_{k=1}^{K}\theta_{k} Tr(A_{k} \tS)$$  
for every $\tS \in {\mathbb P}(d)$.  The correction to the barycenter is obtained by the solution of system \eqref{PPP-1} for the choice $W=\sum_{k=1}^{K} \theta_{k} A_{k}$ .
\end{example}

\subsection{ \textit{\textbf{The Weighted Entropic Risk Measure}} }

Relative entropy (or the Kullback-Leibler divergence), while not a true metric in the space of probability measures, has long been one of the favourite measures of divergence between probability models. Apart from its use in information theory it has played a crucial role in the study of model uncertainty in economic theory (see e.g. \cite{hansen2008robustness}) and has also been used in the study of risk management through the definition of the so called entropic risk measures for the case of single priors (see \cite{follmer2011entropic}; see also \cite{papayiannis2016numerical}). On account of the popularity of entropy and in order to generalize the entropic risk measures first proposed in \cite{follmer2011entropic} for the single prior case to the multi-prior case we extend our definition of Fr\'echet risk measures to include pseudo-metrics rather than just metrics, and  choose $\d$ to be the Kullback-Leibler divergence ($\mathcal{KL}$). This choice leads to a particular class of Fr\'echet risk measures which is called the {\it weighted entropic risk measure}, and can be a considered as a generalization to the multi-prior setting of the class of entropic risk measures (see e.g. \cite{frittelli2004dynamic, barrieu2007pricing, follmer2011entropic, ahmadi2012entropic}). For simplicity we will restrict our attention for  such risk measures to the case where the set of priors $\M$ consists of probability measures which are absolutely continuous with respect to the Lebesgue measure on $\Omega$. 
We recall that in this case if $f_{i}$ are the probability densities of the measures $\mu_i$ then ${\cal KL}(\mu_1 \| \mu_2)=\int_{\Omega} f_1(\omega) \log \frac{f_1(\omega)}{f_2(\omega)} dm$. 

The definition of the {\it weighted entropic risk measure} follows.

\begin{definition}[\textbf{The weighted entropic risk measure}] \hfill

\noindent Given a risk $X : \Omega \to \R$, a set of priors $\M=\{ \mu_1,..., \mu_n \}$ (absolutely continuous with respect to the Lebesgue measure $m$), a set of weights $w=(w_1,\ldots, w_n) \in \Delta^{n-1}$, and a multiplier $\gamma\in(0,\infty)$,
 the weighted entropic risk measure admits a robust representation and is defined by\footnote{We use $1/\gamma$ in the definition of \eqref{wentropic} instead of $1/2\gamma$ used in the definition of general Fr\'echet risk measures, so as to be in line with the standard definition of the entropic risk measure.}
\begin{eqnarray}\label{wentropic}
\RE(X) := \sup_{\mu\in\P(\Omega)} \left\{ \mathbb{E}_\mu[-X] - \frac{1}{\gamma} \mathcal{KL}_{w,\M}(\mu) \right\},
\end{eqnarray}
where $\mathcal{KL}_{w,\M}(\mu) := \sum_{i=1}^n w_i \mathcal{KL}(\mu \| \mu_i) - V_{\M}$ and $V_{\M} := \min_{\mu \in \P(\Omega)} \mathcal{KL}_{w,\M}(\mu)$.
\end{definition}

The next proposition provides  explicit results  concerning the calculation of $\RE$.

\begin{proposition}\label{Prop2.15} 
Assume that the risk $X$ is provided through the risk mapping $-X=\Phi_0(Z)$ where $Z$ is a vector of $d$ random factors that affect the risk position. Let $\M=\{ \mu_1,..., \mu_n \}$ be a set of priors for $Z$ (absolutely continuous with respect to the Lebesgue measure $m$), $w=(w_1,\ldots, w_n) \in \Delta^{n-1}$ a set of weights, and $\gamma \in (0,\infty)$. Then, the weighted entropic risk measure  $\RE$ defined in \eqref{wentropic} has value
\begin{eqnarray}\label{wvalue}
\RE(X) = \frac{1}{\gamma} \log\left( \int e^{-\gamma X} f_0 \ud m \right) = \frac{1}{\gamma} \log\left( \int_{\R^d} e^{\gamma \Phi_0(z)} f_{0}(z)  \ud z \right),
\end{eqnarray}
where $f_{0} := C_0 f_G = C_0 \prod_{i=1}^n f_i^{w_i} $ and $C_0 := \left( \int f_G dm \right)^{-1}$ the normalizing constant. Moreover, the optimal probability measure, $\mu_{\gamma}$, is fully characterized in terms of its probability density function $f_{\gamma} := C_{\gamma} e^{-\gamma X} f_G$, where $C_{ \gamma} := \left( \int e^{-\gamma X} f_{G} \ud m \right)^{-1},$ denotes the normalizing constant.
\end{proposition}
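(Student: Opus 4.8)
The plan is to reduce this multi-prior problem to a single-prior entropic risk measure and then invoke the Gibbs (Donsker--Varadhan) variational principle. First I would rewrite the weighted divergence. Writing $f$ for the Lebesgue density of $\mu$ and using that the weights sum to one, a direct computation gives
\begin{equation*}
\sum_{i=1}^n w_i \mathcal{KL}(\mu \| \mu_i) = \int f \log f \ud m - \int f \log f_G \ud m = \int f \log \frac{f}{f_G} \ud m,
\end{equation*}
where $f_G = \prod_{i=1}^n f_i^{w_i}$. Normalizing $f_G$ by $C_0 = (\int f_G \ud m)^{-1}$ to produce the probability density $f_0 = C_0 f_G$ (the density of a measure I call $\mu_0$), and using $\int f \ud m = 1$, this becomes $\mathcal{KL}(\mu\|\mu_0) + \log C_0$. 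Consequently $V_{\M} = \min_\mu \sum_{i=1}^n w_i \mathcal{KL}(\mu\|\mu_i) = \log C_0$, attained at $\mu = \mu_0$, and the normalized penalty collapses to $\mathcal{KL}_{w,\M}(\mu) = \mathcal{KL}(\mu \| \mu_0)$.

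With this reduction the definition \eqref{wentropic} becomes exactly the robust representation of the single-prior entropic risk measure with reference model $\mu_0$,
\begin{equation*}
\RE(X) = \sup_{\mu \in \P(\Omega)} \left\{ \E_\mu[-X] - \frac{1}{\gamma} \mathcal{KL}(\mu \| \mu_0) \right\}.
\end{equation*}
To evaluate it I would introduce the candidate optimizer $\mu_\gamma$ through $\frac{d\mu_\gamma}{d\mu_0} = e^{-\gamma X}/Z_\gamma$ with $Z_\gamma = \E_{\mu_0}[e^{-\gamma X}]$, which in terms of Lebesgue densities reads $f_\gamma = C_\gamma e^{-\gamma X} f_G$ once $C_0$ is absorbed into $C_\gamma = (\int e^{-\gamma X} f_G \ud m)^{-1}$. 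For any $\mu \ll \mu_0$ the key step is to substitute $-\gamma X = \log Z_\gamma + \log(d\mu_\gamma/d\mu_0)$ into $\gamma\E_\mu[-X] - \mathcal{KL}(\mu\|\mu_0) = \int (-\gamma X - \log(d\mu/d\mu_0)) \, d\mu$ and apply the chain rule, which yields the completing-the-square identity
\begin{equation*}
\gamma \E_\mu[-X] - \mathcal{KL}(\mu \| \mu_0) = \log Z_\gamma - \mathcal{KL}(\mu \| \mu_\gamma).
\end{equation*}
Since $\mathcal{KL}(\mu\|\mu_\gamma) \ge 0$ with equality iff $\mu = \mu_\gamma$, the supremum equals $\log Z_\gamma$ and is attained uniquely at $\mu_\gamma$; dividing by $\gamma$ and using $-X = \Phi_0(Z)$ produces the claimed value $\frac{1}{\gamma}\log\int_{\R^d} e^{\gamma\Phi_0(z)} f_0(z)\ud z$, together with the stated form of the optimal density.

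Two points require care and constitute the main obstacles. First, the reduction of Step~1 needs the priors to have compatible supports (for instance densities $f_i$ strictly positive almost everywhere, or a common support), so that $f_G$ and the geometric-mean measure $\mu_0$ are well defined and the $\mathcal{KL}(\mu\|\mu_i)$ are simultaneously finite; otherwise the rearrangement of the logarithms is not justified. Second, and more delicate, is the integrability underpinning the variational identity: I would assume $Z_\gamma = \E_{\mu_0}[e^{-\gamma X}] < \infty$ (a growth/tail condition linking $\Phi_0$ to $\mu_0$) so that $\mu_\gamma$ is a genuine probability measure, and verify $\E_{\mu_\gamma}[|X|] < \infty$ so the left-hand side is finite at the optimizer. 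Finally, extending the supremum from measures dominated by $\mu_0$ to all of $\P(\Omega)$ is harmless, since $\mathcal{KL}(\mu\|\mu_0) = +\infty$ whenever $\mu \not\ll \mu_0$, so such measures never attain the supremum.
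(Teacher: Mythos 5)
Your proof is correct, but it takes a genuinely different route from the paper's. You first collapse the penalty algebraically: using $\sum_i w_i = 1$ you show $\sum_{i=1}^n w_i \mathcal{KL}(\mu\|\mu_i) = \mathcal{KL}(\mu\|\mu_0) + \log C_0$ with $\mu_0$ the normalized geometric-mean (KL-barycenter) measure, hence $V_{\M} = \log C_0$ and $\mathcal{KL}_{w,\M}(\mu) = \mathcal{KL}(\mu\|\mu_0)$, reducing \eqref{wentropic} to a single-prior entropic risk measure with reference $\mu_0$; you then evaluate the supremum exactly via the Gibbs/Donsker--Varadhan completing-the-square identity $\gamma\E_\mu[-X] - \mathcal{KL}(\mu\|\mu_0) = \log Z_\gamma - \mathcal{KL}(\mu\|\mu_\gamma)$, which delivers the value $\frac{1}{\gamma}\log Z_\gamma$, the optimizer $\mu_\gamma$, and its uniqueness in one stroke from nonnegativity of relative entropy. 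The paper instead works directly in density space: it minimizes $\int V(z,f(z))\ud z$ with $V(z,f) = \Phi_0(z) f + \frac{1}{\gamma}\sum_i w_i f\log(f/f_i)$, establishes well-posedness via the De La Vall\'ee Poussin criterion and the Dunford--Pettis theorem, enforces $\int f \ud m = 1$ through a Lagrange multiplier, derives the first-order condition to identify $f_\gamma = C_\gamma e^{\gamma\Phi_0} f_G$, and then substitutes back into the objective (separately computing $V_{\M} = \log C_0$) to obtain the value. Your route is shorter and its rigor comes cheaper: since the argument is a verification by identity, no weak-compactness or first-order-condition machinery is needed, and attainment plus uniqueness are automatic; what the paper's direct variational scheme buys is a template that does not require guessing the optimizer in closed form, in the same Lagrangian style the authors reuse for their other risk measures, together with an explicit existence statement. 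Your two caveats (a common support for the $f_i$ so that $f_G$ and $\mu_0$ are nondegenerate, and finiteness of $Z_\gamma = \E_{\mu_0}[e^{-\gamma X}]$) are genuine hypotheses that the paper leaves implicit --- indeed the remark following the proposition acknowledges the support issue --- so flagging them is a strength of your write-up, not a gap.
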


\begin{proof}
For the proof of the Proposition see Appendix \ref{A.5}. 
\end{proof}

\begin{remark}
Although $\RE$ is the natural extension of the entropic risk measure to the case where more than one priors are available, there are some drawbacks caused by the more complex framework. The main disadvantage is that, by construction, the weighted KL-divergence requires all the probability models in the set of priors to have common support otherwise, the KL-barycenter density function is a degenerate distribution. Therefore, this risk measure is advised for cases where the priors are close to each other. The weighted KL-divergence performs better when the priors display a high level of homogeneity. Some evidence regarding the use of this distance functional in practice within the context of a learning scheme, and assessment of its performance as compared to other distance functionals are provided in \cite{py2016a} . 
\end{remark}


\section{ Applications of Fr\'echet Risk Measures in Insurance }

The proposed Fr\'echet risk measures can be used in any financial and insurance application in which there is uncertainty in form of multiple models for the risk factors which affect the total position of the firm. We choose two characteristic examples to illustrate and motivate their use in insurance risk management, (a) their use in risk allocation of the firm and (b) their use in premia calculation. Both examples have been chosen as real world applications in which model uncertainty appears, and in which the choice of a wrong probability model may have important negative consequences for the firm.

\subsection{ \textit{\textbf{Risk Allocation under Model Uncertainty }}}

Risk allocation among the different sectors of an insurance firm is one of the major problems in the insurance business since the ability of a firm to fulfill efficiently this task affects its performance and longevity. Consider for example a firm which is composed by $K$ different sectors, and assign to each one, a random variable $X_i$ which represents the profit/loss of each particular sector $k$ for all $k=1,2,...,K$.

In general, the financial behaviour of each sector (and the whole firm) is estimated and modeled according to certain factors (the so called {\it risk factors}), monitored in the market and the economy, which affect the risk. Obviously, an efficient estimation of the random behaviour of these factors should lead to more efficient estimation of the risk of a firm and its particular sectors, since these risks are directly affected from the risk factors stature through the relevant risk mappings. However, when the firm needs to estimate its total risk by a risk measure, let us say $\rho(\cdot)$, then the firm actually needs to know or to have properly modeled the random behavior of the risk factors $Z$ that affect its risk (this interaction is introduced by a risk mapping, e.g. $X = -\Phi_0(Z)$), by a probability measure $\mu \in \P(\R^d)$ (in the case of $d$ random factors). Note that the risk factors that affect the risk of a firm are common for all of its sectors, however each of the sectors is affected in a different way and this is introduced through its particular risk mapping, i.e. the position for the sector $k$ is defined as $X_k = -\Phi_{0,k}(Z)$ where $\Phi_{0,k} : \R^d \to \R$ is its own risk mapping (i.e. how this particular sector quantifies different states of the market/economy). However, the main problem in such operations is that the probability measure $\mu \in \P(\R^d)$ describing the joint random behaviour of risk factors $Z$ is rarely sufficiently estimated even in the case that the marginal behaviour of each risk factor $Z_i$ is sufficiently modeled. The problem of modeling dependence and its effects on capital risk capital allocation for the firm is a very important problem with deep theoretical and serious practical implications which have been extensively studied in the insurance literature (see e.g. \cite{de2012modeling}, \cite{bernard2014risk}, \cite{bernard2016measuring},  \cite{jakobsons2016general}, \cite{liu2017collective}).

In this case, the risk manager of the firm asks for the consultancy by a number of experts (let us say $n$) i.e. asks for their opinion regarding the joint distribution of the random vector $Z=(Z_1,...,Z_d)$ and as a result is provided with $n$ prior probability models $\mu_i$  $i=1,\ldots, n$ where each one quantifies the partial information that is available to each expert. These partial models must be aggregated to a single model for $Z$, which will be used for the quantification of the sector risks and the total risk of the firm and for capital allocation considerations. We maintain that the Fr\'echet risk measures proposed in this work is an ideal tool for such considerations. We focus in the case of the Wasserstein barycentric risk measures and provide expressions in closed or semi-closed form for the total risk and the risk allocation according to Euler's allocation principle, which can be used by practitioners for the efficient allocation of risk.

We start by recalling some fundamental facts concerning the Euler allocation principle for a general risk measure (see \cite{tasche2007capital}). According to the Euler allocation principle if $\rho(X)$ denotes the total risk of the firm then, the risk contribution of $X_{k}$, denoted by $\rho(X_{k} \,|\, X)$, can be computed as
\begin{equation}\label{risk-xi}
\rho(X_{k} \,|\, X) = \left .\frac{d \rho}{d h}(X + h X_{k}) \right\vert_{h=0}.
\end{equation}
The Euler allocation principle satisfies a desirable axiomatic framework for capital allocation and can be used for the creation of quantitative indices such as Return on Risk Adjusted Capital (RORAC) or the Diversification Indices (DI) that can be used for the efficient risk management of the firm. 

In what follows, we perform a detailed solution of the capital asset allocation problem using Wasserstein barycentric risk measures working under the hypothesis that the risk factors can be modeled using the elliptic family of distributions. In particular, we consider the problem of risk allocation in the small $\gamma$ limit.  Assume that the risk factors $Z=(Z_1,\ldots, Z_{d})$  are modeled by a location scale family with the priors $\mu_{i}=LS(m_{i},S_{i})$ and that the firm has $K$ different lines of business, each with a risk mapping $-X_{k}=\Phi_{0,k}(Z)$, giving rise to the respective risk mappings $\Phi_{k}(m,S)$, $k=1,\ldots, K$, when the model $LS(m,S)$ is adopted for the risk factors.  The total position of the firm is  given by $-X=-\sum_{k=1}^{K} X_{k}$, so that the corresponding risk mapping is $\Phi(m,S)=\sum_{k=1}^{K} \Phi_{k}(m,S)$. If the firm alters its position by leveraging slightly the $j$-th line of business then the total position is given by $-X(\epsilon)=-X -\epsilon X_{j}$ and the corresponding risk mapping will be $\Phi(m,S; \epsilon) =\Phi(m,S) + \epsilon \Phi_{j}(m,S)$. To apply the Euler risk allocation principle we need to calculate $\RW(X(\epsilon))$ and then calculate $\RW(X_{j} \mid X):=\left .\frac{d}{d\epsilon} \RW(X(\epsilon)) \right\vert_{\epsilon =0}$. In general this calculation can be done numerically using Proposition \ref{Prop2.8} for the calculation of $\RW(X)$ and $\RW(X(\epsilon))$ and then calculate $\RW(X_j \mid X)$ using numerical differentiation. However, in the limit of small $\gamma$, the perturbative expansion outlined in Remark \ref{Rem2.9} can be used to obtain a result in almost closed form, subject to the numerical solution of a linear matrix equation. We will  also use  the notation of \eqref{NOTATION-123}, as well  as the following simplified notation: 
\begin{equation}\label{NOTATION-SIM}
\begin{aligned}
A_{k}=\Phi_{k}(m_{B},S_{B}),  \,\,\, M_{k}=D_{m}\Phi_{k}(m_{B},S_{B}),  \,\,\,C_{k}=D_{S}\Phi_{k}(m_{B},S_{B}), \,\,\, \\
 A=\sum_{k=1}^{K} A_{k}, \,\,\,  C= \sum_{k=1}^{K} C_{k}, \,\,\,   M=\sum_{k=1}^{K} M_{k}.
\end{aligned}
\end{equation}

Let $\tS, \Y_{i}, \Z_{i}$, $i=1,\ldots, n$,  be the solution of the system of linear matrix equations  \eqref{PPP-1}-\eqref{PPP-3} for the choice $W=C$, and consider the system
\begin{equation}\label{RDP-2-0}
\begin{aligned}
\tS'-\sum_{i=1}^{n}w_{i} \H'_{i} = 2 B  C_{j}  B, \\
\tS'-\J' B - B \J'=0, \\
\J' D_{i} + D_{i} \J' - \H'_i E_{i} -E_{i} \H'_{i} =0,\\
\Y'_{i}G_{i} + G_{i} \Y'_{i} = B_{i} \tS' B_{i}, \,\,\, i=1,\ldots, n,\\
\Z'_{i}G_{i} + G_{i} \Z'_{i} = -2 \Y'_{i} \Y_i-2 \Y'_{i} \Y_{i}, \,\,\, i=1,\ldots, n.
\end{aligned}
\end{equation}

\begin{proposition}\label{Prop3.1}
The risk measure of the position $X=\sum_{k=1}^{K} X_{k}$, up to first order in $\gamma$ is provided by the expression
\begin{equation}\label{EXPANSION-000-EPS-0}
\begin{aligned}
\RW(X)= 
A  + \gamma \left(  \frac{1}{2} \| M  \|^2 + Tr( C \tS) +\frac{1}{2} Tr( \sum_{i=1}^{n} w_{i}\Z_{i}) \right),
\end{aligned}
\end{equation}
The corresponding risk allocation is
\begin{eqnarray*}
\RW(X_{k} \mid X) = A_{j} + \gamma\left( \langle M,M_{j}\rangle+ Tr(C_{j} \tS) + \frac{1}{2} Tr(\sum_{i=1}^{n} w_{i} \Z'_{i}) \right).
\end{eqnarray*}
where the matrices $\tS$, $\Z_{i}$, $\Z_i'$ are obtained  by the solution of the  linear matrix equations \eqref{PPP-1} (for the choice $W=C$), \eqref{PPP-2}   and \eqref{RDP-2-0}.
\end{proposition}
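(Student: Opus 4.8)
The plan is to treat the two assertions separately: the expression for $\RW(X)$ follows by specialising the perturbative expansion of Remark~\ref{Rem2.9} to the aggregate position, and the allocation is then obtained by differentiating that same expansion along the leveraging direction. First I would establish \eqref{EXPANSION-000-EPS-0}. The key observation is that the normalized Fr\'echet function $F_{\M}$, and hence the barycentric data $(m_{B},S_{B})$ together with the auxiliary matrices $B,B_{i},D_{i},E_{i},G_{i}$ of \eqref{NOTATION-123}, depend only on the set of priors $\M$ and not on the risk mapping; so the only ingredients of \eqref{EXPANSION-000} that change when one passes from a single line to $X=\sum_{k=1}^{K}X_{k}$ are the risk-mapping data. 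By linearity of $\Phi(m,S)=\sum_{k=1}^{K}\Phi_{k}(m,S)$ and of the operators $D_{m},D_{S}$ one has $\Phi(m_{B},S_{B})=A$, $M_{B}=D_{m}\Phi(m_{B},S_{B})=M$ and $C_{B}=D_{S}\Phi(m_{B},S_{B})=C$ in the notation \eqref{NOTATION-SIM}. Substituting $W=C$ in the systems \eqref{PPP-1}--\eqref{PPP-3} and inserting these identifications into \eqref{EXPANSION-000} yields \eqref{EXPANSION-000-EPS-0} at once.

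For the allocation I would invoke the Euler principle \eqref{risk-xi} in the form $\RW(X_{j}\mid X)=\frac{d}{d\epsilon}\RW(X(\epsilon))\big\vert_{\epsilon=0}$, where $X(\epsilon)$ carries the risk mapping $\Phi(m,S;\epsilon)=\Phi(m,S)+\epsilon\Phi_{j}(m,S)$, and apply the expansion \eqref{EXPANSION-000} to $X(\epsilon)$. Again the barycentric geometry is frozen, while the risk-mapping quantities acquire an affine $\epsilon$-dependence: $\Phi(m_{B},S_{B};\epsilon)=A+\epsilon A_{j}$, $M_{B}(\epsilon)=M+\epsilon M_{j}$, and $W(\epsilon)=C_{B}(\epsilon)=C+\epsilon C_{j}$. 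The structural point is that $W$ enters \eqref{PPP-1} only through the source term $2BWB$, so by linearity the solution splits as $(\tS(\epsilon),\J(\epsilon),\H_{i}(\epsilon))=(\tS,\J,\H_{i})+\epsilon(\tS',\J',\H_{i}')$, with the primed triple solving the first three equations of \eqref{RDP-2-0}. Propagating through \eqref{PPP-3}, which is linear in $\tS$, gives $\Y_{i}(\epsilon)=\Y_{i}+\epsilon\Y_{i}'$ solving the fourth equation of \eqref{RDP-2-0}; and since \eqref{PPP-2} has source $-2\Y_{i}^{2}$, the product rule yields $\Z_{i}(\epsilon)=\Z_{i}+\epsilon\Z_{i}'$ with $\Z_{i}'$ determined by the last equation of \eqref{RDP-2-0}. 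Differentiating the expansion term by term at $\epsilon=0$, using $\frac{d}{d\epsilon}\tfrac12\|M+\epsilon M_{j}\|^{2}=\langle M,M_{j}\rangle$ together with the trace identities, and collecting the coefficient of $\gamma$, then produces the claimed formula for $\RW(X_{j}\mid X)$.

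The main obstacle I anticipate is justifying the term-by-term differentiation through the implicitly defined matrix objects: one must verify that the maps $W\mapsto(\tS,\J,\H_{i})$, $\tS\mapsto\Y_{i}$ and $\Y_{i}\mapsto\Z_{i}$ are differentiable (in fact affine, respectively quadratic) and that their linearizations are exactly \eqref{RDP-2-0}. This rests on the invertibility of the Sylvester-type operators $X\mapsto XG_{i}+G_{i}X$ on symmetric matrices, guaranteed by positive definiteness of the $G_{i}$, and on smoothness of the matrix square root at positive definite arguments. As an independent consistency route I would use the envelope theorem for the variational problem \eqref{MATRIX-WAS-OPT}: since the penalty $\tfrac{1}{2\gamma}(\overline{F}_{\M}+F^{0}_{\M})$ is $\epsilon$-independent, $\frac{d}{d\epsilon}\RW(X(\epsilon))\big\vert_{\epsilon=0}=\Phi_{j}(m^{*},S^{*})$ evaluated at the maximizer $(m^{*},S^{*})$ of the aggregate problem, and expanding $\Phi_{j}(m_{B}+\gamma\tm,S_{B}+\gamma\tS)$ with $\tm=M$ reproduces the terms $A_{j}+\gamma(\langle M,M_{j}\rangle+Tr(C_{j}\tS))$ of the stated expression, which fixes the bookkeeping of the dominant contributions.
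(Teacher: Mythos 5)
Your strategy is the same as the paper's proof in Appendix \ref{A.7}: apply the expansion of Remark \ref{Rem2.9} to $X$ and to the leveraged position $X(\epsilon)$, note that the barycentric data $(m_{B},S_{B},B,B_{i},D_{i},E_{i},G_{i})$ are independent of the risk mapping, and differentiate at $\epsilon=0$. Your derivation of \eqref{EXPANSION-000-EPS-0} is correct, and your affine-splitting argument for the matrix systems is a clean equivalent of the paper's differentiation of \eqref{RA-1}--\eqref{RA-3}.

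The gap is in the one step you assert rather than perform. Differentiating
\begin{equation*}
\RW(X(\epsilon))= A + \epsilon A_{j} + \gamma \left( \tfrac{1}{2} \| M + \epsilon M_{j} \|^2 + Tr\bigl(( C + \epsilon C_{j})\, \tS(\epsilon)\bigr) +\tfrac{1}{2}\, Tr\Bigl( \sum_{i=1}^{n} w_{i} \Z_{i}(\epsilon)\Bigr) \right)
\end{equation*}
term by term, the middle trace produces \emph{two} contributions at $\epsilon=0$, namely $Tr(C_{j}\tS)+Tr(C\,\tS')$ with $\tS'=\tS'(0)$, by the product rule. The claimed formula contains no $Tr(C\,\tS')$ term, and this cross term is generically nonzero, so "collecting the coefficient of $\gamma$" does not produce the stated expression; it produces $A_{j}+\gamma\bigl(\langle M,M_{j}\rangle+Tr(C_{j}\tS)+Tr(C\,\tS')+\tfrac12 Tr(\sum_{i}w_{i}\Z'_{i})\bigr)$. (This is exactly the point at which the paper's own proof is delicate: in Appendix \ref{A.7} the derivative of $Tr((C+\epsilon C_{j})\tS(\epsilon))$ is written as $Tr(C_{j}\tS(\epsilon)+\epsilon C_{j}\tS'(\epsilon))$, which drops the same cross term.)

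Worse, your envelope-theorem "consistency route" does not confirm the claimed formula --- it contradicts it, and seeing why exposes what the extra terms do. The weak form of the first order conditions characterises the corrections by $[D^{2}F^{0}_{\M}(S_{B})](\tS,\cdot)=2\,Tr(C\,\cdot)$ and $[D^{2}F^{0}_{\M}(S_{B})](\tS',\cdot)=2\,Tr(C_{j}\,\cdot)$, while Lemma \ref{FR-FUN-DER} together with bilinearity of the Sylvester systems gives $Tr(\sum_{i}w_{i}\Z'_{i})=-[D^{2}F^{0}_{\M}(S_{B})](\tS,\tS')$. Hence
\begin{equation*}
Tr(C\,\tS') + \tfrac12\, Tr\Bigl(\sum_{i=1}^{n} w_{i} \Z'_{i}\Bigr) = \tfrac12\,[D^{2}F^{0}_{\M}(S_{B})](\tS,\tS') - \tfrac12\,[D^{2}F^{0}_{\M}(S_{B})](\tS,\tS') = 0 ,
\end{equation*}
so the honest term-by-term derivative collapses to $A_{j}+\gamma(\langle M,M_{j}\rangle+Tr(C_{j}\tS))$ --- precisely the Danskin/envelope answer $\Phi_{j}(m^{*},S^{*})$ expanded to first order, and \emph{without} the $\tfrac12 Tr(\sum_{i}w_{i}\Z'_{i})$ term that the proposition retains. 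The discrepancy between your two routes equals $\tfrac{\gamma}{2}Tr(\sum_{i}w_{i}\Z'_{i})=-\gamma\,Tr(C\,\tS')$, which does not vanish in general (in the scalar case $d=1$ one computes $\tS=4S_{B}C$, $\tS'=4S_{B}C_{j}$, $\tfrac12\sum_{i}w_{i}\Z'_{i}=-4S_{B}CC_{j}$). A complete proof must confront this: either carry the cross term and the cancellation identity above, in which case the $\Z'_{i}$ term drops out of the final answer, or explain why it may be discarded. As written, your proposal papers over exactly the step on which the proposition turns, and misreads the envelope check as confirmation when it is in fact a counter-indication.
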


\begin{proof}
For the proof of the Proposition see Appendix \ref{A.7}.
\end{proof}

\subsection{\textit{\textbf{Risk Premia Estimation under Model Uncertainty}}}

A standard problem in insurance is the estimation of the total claim amount that an insurance firm is obliged to cover at a specific horizon $T>0$ (assumed fixed). The number of the claim events that occur within an interval $[0,T]$ is described by the random variable $N$ while each one of the claim sizes are described by the random variables $C_i$ for $i=1,2,\ldots,N$. Such a calculation provides an estimate of the liabilities of the firm, that should be covered by the insurance premia obtained from the insured customers. A crucial calculation for the viability of the insurance firm is a robust estimation of the total amount of liabilities, so that the insurance premia can be calculated in such a way that the probability of ruin or insolvency of the firm is minimized. Standard premia calculations involve the use of a risk measure (see e.g. \cite{mcneil2015quantitative}) in order to quantify the risk of the firm as a result of its liabilities, which are thereafter distributed to the customers according to their reliability or their needs.

A standard model used in quantifying the total liabilities of the firm is the compound  mixed Poisson process (see e.g. \cite{mikosch2009non}), according to which the total claims amount by time $T$ is given by $-X= \sum_{j=1}^{N} C_j$, where $N$ is assumed to be a Poisson process of possibly stochastic rate $\lambda_N$ and $\{C_j\}_{j=1,...,N}$ are assumed to be independent and identically distributed according to a probability law $F$, with $C_j$ and $N$ independent. We assume that for the fixed horizon $T>0$, considered to be the horizon of operation of the firm, we wish to quantify the total risk incurred by the firm on account of the liabilities $-X$. For that we need the probability distribution for the random variable $X$ and a risk measure $\rho(X)$  of the insurer's choice in order to calculate the risk of the position of the firm. The total amount of premia collected by the firm must then be equal to $\pi(X) = ( 1+\alpha) \rho( X ) $ where $\alpha>0$ is a safety loading factor. Certain popular risk measures used in premium calculations are the variance risk measure, leading to the classical standard deviation premium principle or an exponential utility function leading to exponential premium principle.

In most cases however, there is an important component of model uncertainty in the above calculations, and often there are diverging opinions and models as to the distribution of the number of claims or the size of the claims. It is the aim of this section to assess the applicability of the Fr\'echet risk measures proposed in this paper to the problem of robust premium calculation under model uncertainty. One of the main characteristics of the risk measures proposed in this work, is that they have the ability of filtering possible diverging information concerning the random variable in question, thus leading to an aggregate robust probability model for the characterization of the random variable in question.

We consider that uncertainty concerning the distribution of $X$ is introduced in terms of a set of stochastic factors affecting the frequency of the claims as well as their severity. In particular, in order to a guarantee the independence between $N$ and $C_{i}$ we consider two distinct sets of stochastic factors $\Zob=(\Zo_{1},\ldots, \Zo_{d_1})$ and $\Ztb=(\Zt_{1},\ldots, \Zt_{d_2})$, independent and that there exists two mappings $\Phi^{(1)}_0 : \R^{d_1} \to \R_{+}$ and $\Phi^{(2)}_0 : \R^{d_2} \to \R_{+}$ such that 
$ N \sim Poisson( \Phi^{(1)}_0(\Zob))$, and $C_{i} = \Phi^{(2)}_0(\Ztb)$. Note that in general, $N$ may fail to be Poisson, however, due  to the independence of $N$ and $C_{i}$, it still holds that ${\mathbb E}[-X]={\mathbb E}[N] {\mathbb E}[C_{i}]$.
As a result of that (upon conditioning on $\Zob$, see e.g. \cite{rolski2009stochastic}, Section 4.3.3)  we have that 
\begin{eqnarray*}
{\mathbb E}[-X]= {\mathbb E}[ \Phi^{(1)}_0(\Zob)] \, {\mathbb E}[\Phi^{(2)}_0(\Ztb)],
\end{eqnarray*}
which allows us to construct the risk mapping in terms of the stochastic factors $Z=(\Zob,\Ztb)$.

We now assume a set of priors $\M = \M_1 \cup \M_2$ for the stochastic factors, where $\M_1 =\{\mu^{(1)}_{1}, \ldots, \mu^{(1)}_{n}\}$, $\M_2 =\{\mu^{(2)}_{1}, \ldots, \mu^{(2)}_{n}\}$, are priors for the stochastic factors $\Zob$, $\Ztb$ respectively and we make the assumption that each of these sets consists of location - scatter families of the form $\mu^{(1)}_{i} \sim LS(m^{(1)}_{i}, S^{(1)}_{i})$ and $\mu^{(2)}_{i} \sim LS(m^{(2)}_{i}, S^{(2)}_{i})$, for $i=1,\ldots, n$, where $(m^{(j)}_i, S^{(j)}_{i}) \in \R^{d_{j}} \times {\mathbb P}(d_{j})$, $j=1,2$, $i=1,\ldots, n$. Letting $\Zob_{0} \sim \nu^{(1)}$, and $\Ztb_{0} \sim \nu^{(2)}$ be the central random variables of the families of stochastic factors $\Zob$ and $\Ztb$ we may  define the following function $\Phi : \R^{d_1} \times {\mathbb  P}(d_1) \times \R^{d_2} \times {\mathbb  P}(d_2) \to \R$ which may be factored as the product of two functions $\Phi_{j} : \R^{d_j} \times {\mathbb  P}(d_j)  \to \R$, $j=1,2$ as $\Phi(m^{(1)},S^{(1)},m^{(2)},S^{(2)})=\Phi_{1}(m^{(1)},S^{(1)})  \Phi_{2}(m^{(2)},S^{(2)})$ where 
\begin{eqnarray*}
\Phi_{j}(\mu^{(j)},S^{(j)})=\int_{\R^{d_j}} \Phi^{(j)}_0 ( \mu^{(j)} + (S^{(j)})^{1/2} z)  d\nu^{(j)}(z), \,\,\, j=1,2.
\end{eqnarray*}

Upon defining the Fr\'echet functions
\begin{eqnarray*}
{\mathbb F}_{\M_j}(m^{(j)},S^{(j)}):= \sum_{i=1}^{n}w_{i}\| m^{(j)}- m^{(j)}_{i} \|^2\\
+ \sum_{i=1} w_{i} Tr(S^{(j)} + S^{(j)}_i -2 ((S^{(j)}_i)^{1/2} S^{(j)} (S^{(j)}_{i})^{1/2})^{1/2}, \,\,\, j=1,2,
\end{eqnarray*}
 the corresponding Fr\'echet variances $V_{\M_j}$,  and the normalized Fr\'echet functions $F_{\M_{j}}$, $j=1,2$,  for the two sets of priors  we can define the Wasserstein barycenter risk measure $\RW(X)$ in terms of the matrix optimization problem
\begin{eqnarray*}
\RW(X)=\mathop{\max_{(\mu^{(1)},S^{(1)}) \in \R^{d_1} \times {\mathbb P}(d_1)}}_{(\mu^{(2)},S^{(2)}) \in \R^{d_2} \times {\mathbb P}(d_2) }\large\{ \Phi_{1}(m^{(1)},S^{(1)})  \Phi_{2}(m^{(2)},S^{(2)}) \\
-\frac{1}{2\gamma} \left(F_{\M_{1}}(m^{(1)},S^{(1)}) + F_{\M_{2}}(m^{(2)},S^{(2)})\right) \large\}
\end{eqnarray*}
Following Proposition \ref{Prop2.8} we see that the maximizer is the solution of the system of matrix equations
\begin{equation}\label{FOC-A-PREM}
\begin{aligned}
\gamma D_{m^{(1)}}\Phi_{1}(m^{(1)},S^{(1)})\Phi_{2} (m^{(2)},S^{(2)}) -(m^{(1)}-\sum_{i=1}^{n}w_{i} m^{(1)}_{i})=0, \\
2 \gamma( S^{(1)})^{1/2} D_{S^{(1)}}\Phi(m^{(1)},S^{(1)})  \Phi_{2}(m^{(2)},S^{(2)})(S^{(1)})^{1/2} \\ - (S^{(1)}-\sum_{i=1}^{n} w_{i} ((S^{(1)})^{1/2} (S^{(1)}_{i}) ( S^{(1)})^{1/2} )^{1/2})=0,\\
\gamma\Phi_{1} (m^{(1)},S^{(1)}) D_{m^{(2)}}\Phi_{2}(m^{(2)},S^{(2)}) -(m^{(2)}-\sum_{i=1}^{n}w_{i} m^{(2)}_{i})=0, \\
2 \gamma( S^{(2)})^{1/2} \Phi_{1}(m^{(1)},S^{(1)})  D_{S^{(2)}}\Phi(m^{(2)},S^{(1)})  (S^{(2)})^{1/2} \\- (S^{(2)}-\sum_{i=1}^{n} w_{i} ((S^{(2)})^{1/2} (S^{(2)}_{i}) ( S^{(2)})^{1/2} )^{1/2})=0,
\end{aligned}
\end{equation}
In general system \eqref{FOC-A-PREM} must be solved numerically by an iterative scheme, or using a perturbative scheme 
similar to that proposed in Remark \ref{Rem2.9}. For the sake of an example providing an illustration of the use of Fr\'echet risk measures in risk premia calculations let us consider the case of a linear risk mapping of the form 
\begin{eqnarray*}
\Phi_0^{(j)}(Z^{(j)})=\langle a^{(j)}, Z^{(j)} \rangle, \,\,\, a^{(j)} \in \R^{d_j}, \,\,\, j=1,2.
\end{eqnarray*}
Then, system \eqref{FOC-A-PREM} is conveniently decoupled  to yield  the solution $(m^{(1)},S^{(1)}_{B}, m^{(2)}, S^{(2)}_{B})$ where $S^{(j)}_{B}$ is the covariance matrix Wasserstein barycenter of $\M_{j}$, $j=1,2$, and $(m^{(1)},m^{(2)})$  is the solution to the system
\begin{eqnarray}\label{222-1}
\gamma \langle a^{(2)}, m^{(2)} \rangle \,  a^{(1)}  - (m^{(1)}  - m^{(1)}_{B} )=0, \\
\gamma \langle a^{(1)}, m^{(1)} \rangle \,  a^{(2)}  - (m^{(2)}  - m^{(2)}_{B} )=0,
\end{eqnarray}
where $m^{(j)}_{B}=\sum_{i=1}^{n} w_{i} m^{(j)}_{i}$, $j=1,2$. The solution to this system will provide the minimizer, and from that the risk measure can be obtained, and hence a robust premium calculation principle can be  constructed.

For the sake of ilustration let us consider the case where $d_1=d_2=1$, in which $m^{(j)}=m_{j} \in \R$, and $S^{(j)}=\sigma_{j} \in \R_{+}$, $j=1,2$.  In this case we also have that $a^{(j)} =a_{j} \in \R$, $j=1,2$. In this case system \eqref{222-1} reduces to the linear system
\begin{eqnarray*}
\gamma a_{1} a_{2} m_{2} + m_{1,B} = m_{1}, \\
\gamma a_{1} a_{2} m_{1} + m_{2,B}= m_{2},
\end{eqnarray*}
where 
\begin{eqnarray*}
m_{j,B}=\sum_{i=1}^{n} w_{i} m_{j,i}, \,\,\, j=1,2,
\end{eqnarray*}
is the mean for the barycenter of $\M_{j}$, $j=1,2$,
which can be readily solved to yield
\begin{eqnarray*}
m_1= \left( \frac{1+ \gamma^2 a_1^2 a_2^2}{1-\gamma^2 a_1^2 a_2^2} \right) m_{1,B} + \left( \frac{\gamma a_1 a_2}{1-\gamma^2 a_1^2 a_2^2} \right) m_{2,B}, \\
m_2= \frac{1}{1-\gamma^2 a_1^2 a_2^2}m_{2,B} + \left( \frac{ \gamma a_1 a_2 }{1-\gamma^2 a_1^2 a_2^2} \right)  m_{1,B},
\end{eqnarray*}
while $\sigma_j$  is given by
\begin{eqnarray*}
\sigma_{j}=\left( \sum_{i=1}^{n} w_{i} \sigma_{j,i}^{1/2} \right)^2, \,\,\, j=1,2.
\end{eqnarray*}

In order to assess the ability of the proposed risk measures to filter out the uncertainty arising from multiple information sources, and construct an optimal aggregate model out of them, we construct the following thought experiment. There is a true probability model for the random factors $Z = (Z^{(1)}, Z^{(2)})$ (where each factor is considered as one dimensional for the sake of simplicity) which is known to the designer of the experiment but assumed to be unknown to the risk manager of the firm. The risk manager has access to a number of possibly diverging probability models for $Z = (Z^{(1)}, Z^{(2)})$, comprising a set of priors $\M$, which are constructed by the designer of the experiment as random perturbations of the true probability law. Using one risk measure of the Fr\'echet class proposed here, the risk manager can aggregate the information provided by the alternative models in $\M$ into a single model for $Z$ which will be used in order to quantify the risk. Since the true probability law is known to the experimenter, the success of each risk measure can be assessed by quantifying the deviation of the calculated risk from the true one. Clearly, the risk measure which provides results closer to the true value has done a better job in aggregating the diverging prior information therefore leading to a robust approximation of the risk. 

For the needs of the experiment presented here, we have assumed that $N$ follows the Poisson distribution with rate parameter $Z^{(1)}$ where $Z^{(1)}\sim N(m_1,s_1)$ and $C_j = Z^{(2)}$ where $Z^{(2)}\sim N(m_2,s_2)$, with the parameters chosen so that positivity of $N$ and $C_j$ is guaranteed. Other choices for the claim size's distribution have been considered (e.g. Gamma distribution or more random factor to affect the claim size) however they are not reported here for the sake of brevity. Model uncertainty has been introduced as uncertainty in the parameters $(m_1,s_1)$ and $(m_2,s_2)$ where by random perturbations of the true value a number of alternative scenarios for $Z$ have been created, collected in $\M$. Three different types of perturbation schemes have been employed depending on the size of the random perturbation. In the first type we have assumed small random perturbations around the true value of the parameters $(m_1,s_1,m_2,s_2)$, which leads to a set of priors with high homogeneity (or equivalently of low Fr\'echet variance), in the second type the perturbation was of medium size leading to a set of priors of medium homogeneity as measured by the Fr\'echet variance, whereas the third type consisted of large perturbations for the parameters leading to a prior set of low homogeneity. These three different perturbation protocols will be hereafter referred to as the ``high" (hh), ``medium" (mh) and ``low" (lh) homogeneity scenario respectively. 

\begin{table}[htp]
	\tiny
	\begin{center} 
		\begin{tabular}{|c|ccc|ccc|ccc|}
			\hline		
			$\bm \gamma$ &	
			\multicolumn{3}{c|}{ $\mathbb{E}[ \rho_{\gamma}(X) ]$}& 
			\multicolumn{3}{c|}{ $\mathbb{E}\left( \frac{| \rho_{\gamma}(X) - \rho_0(X) |}{\rho_0(X)} \right) $ }&
			\multicolumn{3}{c|}{$Var^{1/2} \left( \frac{| \rho_{\gamma}(X) - \rho_0(X) |}{\rho_0(X)} \right) $}\\ 
			&\textbf{\textit{Average}} & \textbf{\textit{Entropic}} & \textbf{\textit{Wasserstein}} & \textbf{\textit{Average}} & \textbf{\textit{Entropic}} & \textbf{\textit{Wasserstein}} & \textbf{\textit{Average}} &  \textbf{\textit{Entropic}} & \textbf{\textit{Wasserstein}} \\ \hline
			
			\multicolumn{10}{|c|}{\bf (hh) high homogeneity}\\  \hline
			\multicolumn{10}{|c|}{\bf i. n=5}\\ \hline 
			
			
			0.1000&	 5659.20&	11633.00&	  5679.70&	0.1318&	1.3267&	0.1359&	0.0165&	0.0689&	0.0143\\
			0.0500&	 5658.90&	10485.00&	  5257.20&	0.1318&	1.0969&	0.0514&	0.0163&	0.0641&	0.0136\\
		     0.0100&	 5660.10&	  8782.00&	  5126.90&	0.1320&	0.7564&	0.0255&	0.016	&  0.0554&	0.0123\\
			0.0050&	 5655.70&	  6816.00&	  5060.40&	0.1311&	0.3632&	0.0141&	0.0161&	0.0332&	0.0098\\
			0.0010&	 5663.40&	  5783.00&	  5014.90&	0.1327&	0.1567&	0.0103&	0.0163&	0.0172&	0.0080\\
			0.0000&	 5660.30&	  5658.00&	  5001.80&	0.1321&	0.1315&	0.0101&	0.0167&	0.0170&  0.0077\\

			\hline
			\multicolumn{10}{|c|}{\bf ii. n=10}\\ \hline 
			
			0.1000&	5659.20& 	    -    &  5682.10&	 0.1240&	          -& 0.1364&	0.0129&	     -   &	0.0100\\
              0.0500&	5619.90&	 10742.00&	5259.00&	 0.1230&  1.1484&	0.0518&	0.0124&	0.0537&	0.0092\\
			0.0100&	5615.20&	   8879.00&	5127.40&	 0.1232&	 0.7758&	0.0255&	0.0131&	0.0437&	0.0091\\
			0.0050&	5615.80&	   6795.00&	5064.00&	 0.1235&	 0.3590&	0.0134&	0.0132&	0.0239&	0.0082\\
			0.0010&	5618.30&	   5739.00&	5011.80& 0.1234&	 0.1479&	0.0073&	0.0121&	0.0124&	0.0055\\
			0.0000&	5617.10&	   5612.00&	5001.10&	 0.1232&	 0.1224&	0.0072&	0.0131&	0.0133&	0.0053\\
	
			\hline
			\multicolumn{10}{|c|}{\bf iii. n=30}\\ \hline 
			
			0.1000&	5560.10& 	        -&	5681.30&	 0.1120&		-   &   0.1363&	0.0095&	        -&   0.0060\\
			0.0500&	5558.20&	 11080.00&	5257.80&	 0.1116&	1.2161&	0.0516&	0.0092&	0.0417&	0.0051\\
			0.0100&	5559.50&	  8999.00&	5126.10&	 0.1119&	0.7997&	0.0252&	0.0091&	0.0285&	0.0053\\
			0.0050&	5556.40&	  6763.00&	5062.60&	 0.1113&	0.3525&	0.0126&	0.0094&	0.0143&	0.0052\\
			0.0010&	5561.20&	  5688.00&	5013.10&	 0.1122&	0.1376&	0.0047&	0.0089&	0.0087&	0.0034\\
			0.0000&	5557.70&	  5553.00&	5001.50& 0.1115&	0.1105&	0.0042&	0.0092&	0.0093&	0.0031\\
			
			\hline 
			\multicolumn{10}{|c|}{\bf (mh) medium homogeneity}\\ \hline

			\multicolumn{10}{|c|}{\bf i. n=5 }\\ \hline 
			
			0.1000&	5559.90& 	        -&	5684.60&	 0.1120&	         -&  0.1369&	0.0315&	       - &	0.0286\\
			0.0500&	5554.10&	 10930.00&	5259.70&	 0.1108&	 1.1860&	 0.0523&	0.0317&	0.2292&	0.0253\\
			0.0100&	5557.80&	   8523.00&	5132.10&	 0.1116&	 0.7045&	 0.0299&	0.0317&	0.2016&	0.0208\\
			0.0050&	5537.30&	   6491.00&	5055.70&	 0.1075&	 0.2981&	 0.0217&	0.0321&	0.0865&	0.0169\\
			0.0010&	5557.30&	   5599.00&	5020.90&	 0.1115&	 0.1198&	 0.0203&	0.0316&	0.0371&	0.0153\\
			0.0000&	5547.50&	   5471.00&	5004.60& 0.1095&	 0.0943&	 0.0192&	0.0307&	0.0338&	0.0148\\
			
			\hline
			\multicolumn{10}{|c|}{\bf ii. n=10}\\ \hline 
			
			0.1000&	5476.70&		        -&   5682.20&  0.0953&          -&		0.1363&	0.0095&		   -&   0.1364\\
			0.0500&	5478.40&	 11283.00&	5258.80&  0.0957&	  1.2161&	0.0516&	0.0092&	1.2565&	0.0518\\
			0.0100&	5476.10&	   8374.00&	5126.50&  0.0952&	  0.7997&	0.0252&	0.0091&	0.6749&	0.0267\\
			0.0050&	5478.80&	   6409.00&	5062.40&	  0.0958&  0.3525&	0.0126&	0.0094&	0.2818&	0.0180\\
			0.0010&	5472.70&	   5498.00&	5009.70&	  0.0945&  0.1376&	0.0047&	0.0089&	0.0997&	0.0144\\
			0.0000&	5471.50&	   5376.00&	5001.90&	  0.0943&  0.1105&	0.0042&	0.0092&	0.0753&	0.0144\\

			\hline
			\multicolumn{10}{|c|}{\bf iii. n=30}\\ \hline 
			
			0.1000&	5388.40& 	        -&	5678.90&  0.0777&		        -&  0.1358&	0.0156&	        -&	0.0117\\
			0.0500&	5388.00&	 11698.00&	5256.30&	  0.0776&	1.3396&	0.0513&	0.0152&	0.2215&	0.0108\\
			0.0100&	5388.90&	   8201.00&	5126.60&	  0.0778&	0.6402&	0.0254&	0.0149&	0.1351&	0.0101\\
			0.0050&	5390.00&	   6271.00&	5064.70&	  0.0780&	0.2542&	0.0140&	0.0156&	0.0487&	0.0090\\
			0.0010&	5389.80&	   5411.00&	5011.70&	  0.0780&	0.0823&	0.0087&	0.0151&	0.0198&	0.0062\\
			0.0000&	5386.80&	   5283.00&	5001.30&	  0.0774&	0.0569&	0.0086&	0.0153&	0.0180&	0.0063\\
			
			\hline 
			\multicolumn{10}{|c|}{\bf (lh) low homogeneity}\\  \hline
			\multicolumn{10}{|c|}{\bf i. n=5}\\ \hline

			0.1000&	5392.30& 	         -&	 5675.40&	0.0840&	        -&	0.1355&	0.0556&	        -&	0.0645\\
			0.0500&	5394.90&	  10862.00&	 5263.30&	0.0856&	1.1727&	0.0653&	0.0554&	0.4293&	0.0483\\
			0.0100&	5389.20&	    8020.00&	 5120.50&	0.0841&	0.6048&	0.0500&	0.0519&	0.2883&	0.0366\\
			0.0050&	5394.30&	    6256.00&	 5064.10&	0.0841&	0.2533&	0.0484&	0.0571&	0.1303&	0.0382\\
			0.0010&	5405.50&	    5395.00&	 5015.70&	0.0870&	0.0933&	0.0448&	0.0561&	0.0626&	0.0355\\
			0.0000&	5390.00&	    5228.00&	 4998.00&	0.0839&	0.0724&	0.0459&	0.0549&	0.0538&	0.0345\\
			
			\hline
			\multicolumn{10}{|c|}{\bf ii. n=10}\\ \hline 
			
			0.1000&	5285.40& 	        -&	5676.40&  0.0607&		        -&  0.1353&	0.0397&	        -&	0.0451\\
			0.0500&	5278.50&	 10782.00&	5245.40&	  0.0600&	1.1566&	0.0545&	0.0392&	0.4551&	0.0365\\
			0.0100&	5292.60&	  7507.00&	5126.00&	  0.0619&	0.5020&	0.0386&	0.0395&	0.2388&	0.0286\\
			0.0050&	5297.70&	  6038.00&	5067.50&	  0.0636&	0.2105&	0.0346&	0.0386&	0.0972&	0.0251\\
			0.0010&	5290.30&	  5250.00&	5009.50&	  0.0621&	0.0699&	0.0333&	0.0394&	0.0508&	0.0244\\
			0.0000&	5302.00&	  5132.00&	5009.30&	  0.0634&	0.0532&	0.0318&	0.0395&	0.0420&	0.0238\\
			
			\hline
			\multicolumn{10}{|c|}{\bf iii. n=30}\\ \hline 
			
			0.1000&	5195.20& 	         -&	 5682.40&	0.0404&		    -&  0.1365&	0.0230&	        -&	0.0256\\
			0.0500&	5193.00&	  10296.00&	 5260.50&	0.0404&	1.0596&	0.0524&	0.0233&	0.4497&	0.0240\\
			0.0100&	5185.40&	    6937.00&	 5124.70&	0.0384&	0.3884&	0.0281&	0.0229&	0.1668&	0.0195\\
			0.0050&	5196.20&	    5861.00&	 5065.40&	0.0408&	0.1753&	0.0216&	0.0239&	0.0738&	0.0163\\
			0.0010&	5189.70&	    5172.00&	 5009.10&	0.0395&	0.0534&	0.0184&	0.0222&	0.0390&	0.0139\\
			0.0000&	5189.70&	    5047.00&  4998.00&	0.0394&	0.0393&	0.0180&	0.0231&	0.0361&	0.0135\\
			
			\hline
		\end{tabular}
		\caption{ Testing of the robust properties for the proposed Fr\'echet risk measures in the calculation of the total risk of the liabilities position $X$ under different homogeneity scenarios within the prior set. }
		\label{tab2}
	\end{center}
\end{table} 

Since for each homogeneity scenario a multitude of models regarding the random behaviour of $Z$ are available, and having in mind the discussion in Section \ref{Section2}, the risk manager should aggregate appropriately the prior information in the set $\M$, employing a Fr\'echet risk measure type, i.e. solving the related optimization problem 
\begin{eqnarray}\label{frm}
\max_{\mu\in\mathcal{P}_{\F}} \left\{ \mathbb{E}_{\mu}[-X] - \frac{1}{2\gamma} F_{\M}(\mu)  \right\}   
\end{eqnarray}
where $\mathcal{P}_{\F}$  is the admissible set of probability models indicated by the Fr\'echet function choice, and for a variety of choices for the uncertainty aversion parameter $\gamma$. In that case, the maximizer $\mu^{*}=\mu_{\gamma}$ of the problem \eqref{frm} is the optimal aggregation probability model, with respect to the Fr\'echet function chosen, which better represents the aggregate belief regarding the random behaviour of $Z$ taking into account the risk manager's preferences\footnote{The risk manager's preferences are introduced through the choice of the parameter $\gamma$. If $\gamma$ is chosen such that $\gamma = 0$ then the barycenter model under the notion of the respective Fr\'echet variance is selected. The parameter $\gamma$ can be interpreted as the risk manager's confidence to the provided prior information.}. Moreover, the quantity $\mathbb{E}_{\mu^{*}}[-X]$, represents the estimate of the total claim amount calculated under the optimal aggregation model $\mu^{*}$. Obviously, different estimates for the total claim amount shall be obtained under different choices of Fr\'echet functions and as a result using different Fr\'echet risk measures. Thus, it is clear that the robustness of the risk measure depends on the Fr\'echet function (or Fr\'echet variance) choice. In order to clarify this effect, we perform the following simulation study.

We test the aggregation models obtained by the two different Fr\'echet variance types discussed in this work, i.e. the Kullback-Leibler divergence barycenter (WKL) and the quadratic Wasserstein Barycenter (QWB), for their robustness properties under (a) several scenarios of homogeneity (high, medium and low), (b) different choices of the parameter $\gamma$ (corresponding to different levels of tolerance of the risk manager towards the deviance of the measure from the Fr\'echet barycenter; larger values allow for larger deviance from the mean element) and (c) different number of expert opinions ($n=5,10$ and $30$). For comparison reasons, we also include the simple average risk measure, i.e. the risk measure which treats the information provided from the prior set $\M$ by averaging the provided probability models. In order to assess the behavior of each Fr\'echet risk measure, we simulate each scenario for $B=1000$ times and then we evaluate the mean behavior of the risk measure by some appropriate quantities, namely
\begin{itemize}
\item the expected value of the risk measure, $\mathbb{E}[ \rho_{\gamma}(X) ]$, where expectation is taken over all different simulations,
\item the expected (with respect to the different simulations) relative difference of the risk measure value with respect to the true one, $\mathbb{E}\left( \frac{| \rho_{\gamma}(X) - \rho_0(X) |}{\rho_0(X)} \right)$ and 
\item the standard deviation (with respect to the different simulations) of the relative difference of the risk measure with respect the true value, $Var^{1/2} \left( \frac{| \rho_{\gamma}(X) - \rho_0(X) |}{\rho_0(X)} \right)$.
\end{itemize}
We recall that $\mu_{\gamma}$ and $\rho_{\gamma}(X)$ is the aggregate probability measure and risk measure respectively and $\mu_0$ and $\rho_0(X)$ is the true probability measure for $Z$ and the true risk measure value respectively,  known to the designer of the experiment. Note that $\mu_{\gamma}$ and $\rho_{\gamma}(X)$ depend on the choice of Fr\'echet function used, i.e. will differ if the Wasserstein metric is employed instead of the weighted KL-Divergence. 

The results are illustrated in Table \ref{tab2}. As it is shown from the evidence in Table \ref{tab2}, the Wasserstein barycentric risk measure appears to be consistently more robust in filtering out the diverging prior information than the other two choices even in the cases where the number of expert opinions is small ($n=5$). The homogeneity level within the prior set (i.e. the different expert opinions) has a smaller effect on the Wasserstein barycentric risk measure and the evidence of the simulation study indicates that its estimates converge to the real ones faster than the other choices in comparison, when the number of expert opinions $n$ grows. On the other hand, the weighted entropic risk measure displays worse performance that the Wasserstein risk measure, especially in the case where the homogeneity of the priors is low and seems to be more sensitive to the choice of $\gamma$. However, its behaviour seems to improve when $\gamma$ is taken small so that we get closer to the original barycenter. The average risk measure also performs worse that the Wasserstein risk measure, except of course for choices of $\gamma$ large, in which the large value of the uncertainty aversion parameter invalidates the action of the Fr\'echet penalty function, having as a result the inability of the risk measure in filtering out the uncertainty. 

Therefore, we can conclude that the Wasserstein barycentric risk measure demonstrates the most robust behavior as compared to the other risk measures proposed in this work. The results displayed support our original claim, that the use of the Wasserstein barycentric risk measure for the premium calculation will therefore lead to more accurate results hence, implying smaller losses for the firm and less financial burden for the customers therefore having better risk transfer properties. We do not find the superior performance of the Wasserstein barycentric risk measure with respect to robustness strange, since it is based on a true metric on the space of probability measures thus better quantifying the distance between the various priors and the Fr\'echet variance of the prior set, over a wide range of possible distribution families possibly diverging from the Normal distribution.

\section{Conclusions}

In this work we proposed a novel class of multi-prior convex risk measures, the class of Fr\'echet risk measures which is well-suited for quantifying the risk under uncertainty, in cases where multiple diverging models concerning the true distribution of the risk. These risk measures are designed so as to filter uncertainty, by aggregating available models in terms of a penalty function of Fr\'echet type. This results in choosing an aggregate model for the risk as close as possible to all available models, which in some sense minimizes model uncertainty as quantified by the Fr\'echet variance on the set of the available probability models. Geometrically this can be interpreted as choosing a probability model which is close to the barycenter of the set of priors in order to describe and represent the risk. The proposed risk measures are tested in two characteristic problems from insurance, (a) the risk capital allocation for an insurance firm and (b) the problem of premium calculation of an insurance firm.

%
%

\appendix 



\section{Proofs}

\subsection{Proof of Proposition \ref{Prop2.2}}\label{A.1}
\begin{proof} (i) By the definition of $\RF$ it holds that for any $\mu \in {\cal P}(\R^{d})$,  $\RF(X) \ge {\mathbb E}_{\mu}[\Phi_0(Z)]-\frac{1}{2\gamma} \alpha(\mu)$, so that choosing $\mu=\mu_{B}$ and keeping in mind that $a(0)=0$ we conclude  that
for any $\gamma >0$ it holds that $\RF(X ;\gamma) \ge {\mathbb E}_{\mu_{B}}[\Phi_0(Z)] \ge 0$. Furthermore by the convexity of $F_{\M}$ and the fact that $a$ is increasing we conclude that  $ a :=  \frac{1}{2\gamma}\alpha \,\circ \, F_{\mu} : {\mathcal P}(\R^d) \to \R_{+}$ is a convex function. Therefore,  by the definition of $\RF$, we have that $\RF(X)=\sup_{\mu \in {\cal P}(\R^{d})}[ {\mathbb E}_{\mu}[-X] - a(\mu) ]$, hence it is a convex risk measure by the robust representation theorem of \cite{follmer2002robust}.

\noindent (ii) Let $0\le \gamma_1 \le \gamma_2$.  For any $\mu \in {\cal P}(\R^{d})$ we have that
\begin{eqnarray*}
{\mathbb E}_{\mu}[-X] -\frac{1}{2\gamma_1} \alpha(F_{\M}(\mu) \le {\mathbb E}_{\mu}[-X] -\frac{1}{2\gamma_2} \alpha(F_{\M}(\mu) 
 \le \sup_{\mu \in {\cal P}(\R^d)} \left\{ {\mathbb E}_{\mu}[-X] -\frac{1}{2\gamma_2} \alpha(F_{\M}(\mu) \right\}\\=\RF(X ;\gamma_2),
\end{eqnarray*}
and taking the supremum over all $\mu \in {\cal P}(\R^d)$ on the left hand side we conclude that $\RF(X ;\gamma_1) \le \RF(X ; \gamma_2)$.

Since $\alpha(F_{\M}(\mu)) \ge 0$ for every $\mu \in {\cal P}(\R^d)$, we have that $-\frac{1}{2\gamma} \alpha(F_{\M}(\mu)) \to -\infty$  in the limit as $\gamma \to 0$, except for the choice $\mu=\mu_{B}$ for which this term vanishes. This fact, combined with the monotonicity with respect to $\gamma$, leads to the result that $\lim_{\gamma \to 0^{+}} \RF(X ;\gamma)={\mathbb E}_{\mu_{B}}[-X]$. On the other hand in the limit as $\gamma \to \infty$ the penalty term is inactive, so that $\lim_{\gamma \to \infty} \RF(X ; \gamma)=\sup_{\mu \in {\cal P}(\R^d)} {\mathbb E}_{\mu}[-X]$. The supremum is attained on the probability measure which is a Dirac measure on the $Z$ for which the essential supremum of $-X=\Phi_0(Z)$ is attained.
\end{proof}


\subsection{Proof of Proposition \ref{Prop2.4}}\label{A.2}

\begin{proof}
(a). We express the Fr\'echet function as well as the term ${\mathbb E}_{\mu}[-X]$ (using the risk mapping $-X=\Phi_0(Z)$), in terms of the quantiles of the measures in $\M$ and the maximizer (using the representation of the Wasserstein distance in terms of the quantile functions in the one dimensional case stated in \eqref{Qrep}), $g_i$ and $g$ respectively (note that $g_i, g$ are  the quantiles for random variable $Z$). Then it is straightforward to check that the problem can be represented as
\begin{eqnarray*} \max_{g\in\mathbb{S}} \left\{ \int_0^1 \Phi_0(g(s))ds - \frac{1}{2\gamma}\left(\sum_{i=1}^n w_i\int_0^1|g(s)-g_i(s)|^2 ds - V_{\M}\right) \right\}\\
 = \max_{g \in \mathbb{S}} \left\{  \int_{0}^{1} \left( \Phi_0(g(s)) - \frac{1}{2\gamma}  (g(s)-g_{B}(s) )^{2} \right)  ds \right\}
 \end{eqnarray*}
where by ${\mathbb S}$ we denote the space of quantile functions and the second expression is derived by straightforward algebraic manipulation using the definition of $V_{\M}=\inf_{g\in\mathbb{S}} \sum_{i=1}^n w_i \int_0^1 (g(s) - g_i(s))^2 ds = \mathbb{F}_{\M}(\mu_B)$ with the latter term expressed in terms of quantiles. This leads, after a change of sign, to the minimization problem $\inf_{g \in {\mathbb S}} \int_{0}^{1} U(s,g(s)) ds$, where $U(s,g(s))=-\Phi_0( g(s) )+ \frac{1}{2\gamma}( \sum_{i=1}^{n}w_{i} |g(s)-g_{i}(s)|^{2} - V_{\M})$. The existence of a minimizer for the relaxed problem can be obtained by a standard application of the direct method of the calculus of variations, by constructing a minimizing sequence  $\{g_{n}\} \in {\mathbb S}$  and using weak compactness results to guarantee the existence of a $g \in L^{2}([0,1])$ such that that $g_{n} \rightharpoonup g$ in $L^{2}([0,1])$. Indeed, a minimizing sequence is norm bounded, a fact that guarantees the existence of a weakly convergent subsequence. Then, since $U(s,\cdot)$ is convex  it follows by Theorem 6.54  in \cite{fonseca2007modern} that the functional $g \mapsto  \int_{0}^{1} U(s,g(s)) ds$ is weakly lower semi-continuous in $L^{2}([0,1])$ and the existence is guaranteed in $L^{2}([0,1])$. It remains to check that the minimizer is indeed a quantile function, i.e. that it is increasing and right continuous. By an application of Mazur's lemma there exists a  new sequence $\{ \tilde{g}_{n}\}$, the terms of which are convex combinations of the minimizing sequence $\{g_{n}\}$  such that $\tilde{g}_{n} \to g$ in $L^{2}([0,1])$, with the convergence being strong. Since the terms of $\{\tilde{g}_{n}\}$ are convex combinations of the terms of the minimizing sequence $\{g_{n}\}$ and $\{g_{n}\} \subset  {\mathbb S}$ it follows that $\{\tilde{g}_{n}\} \subset {\mathbb S}$. Since $\tilde{g}_{n} \to g$ in $L^{2}([0,1])$ (strong), there exists a subsequence (not renamed) such that $\tilde{g}_{n} \to g$ a.e. in $[0,1]$, so that $g$ is increasing. A further application of Egorov's theorem allows us to pass to a further subsequence converging uniformly to the same limit, from which right continuity follows. Hence the minimizer $g \in {\mathbb S}$. Uniqueness on ${\mathbb S}$ follows from the strict convexity of the functional $g \mapsto  \int_{0}^{1} U(s,g(s)) ds$. The above arguments readily generalize for the treatment of the generalized Wasserstein risk measures discussed in Remark \ref{GENERALIZED-WRM},  for $d=1$. \\

\noindent (b) The minimum Wasserstein distance is attained by the Wasserstein barycenter $\mu_{B}\in\P(\R)$, which in terms of the quantile functions is expressed as $g_B(s) = \sum_{i=1}^n w_i g_i(s)$ and the corresponding minimum value is $V_{\M} = \sum_{i=1}^n w_i \int_{0}^{1} (g_B(s) - g_i(s) )^2 ds$. Plugging this expression in relation \eqref{rep1}, and after a few calculations, the corresponding variational problem takes the form 
$$\sup_{g\in\mathbb{S}} \left\{ \int_0^1 \Phi_0(g(s))ds - \frac{1}{2\gamma} \int_0^1(g(s)-g_B(s) )^2 ds \right\},$$
where the maximum is attained as we showed in (a). In case where $\Phi_0(\cdot)$ is smooth enough (i.e. differentiable), the maximizer of the problem can be obtained in a semi-analytic form in terms of the quantile function. Indeed, taking first order conditions in the above expression, we obtain the maximizer $g$ as the solution of the variational inequality
$$ \int_0^1 \{ \Phi_0'(g(s)) - \frac{1}{\gamma}(g(s)-g_B(s))\}\tilde{g}(s) ds \leq 0, $$
for every $\tilde{g}$ such that $g + \epsilon \tilde{g} \in \mathbb{S}$ for small enough $\epsilon$. For $\gamma$ small enough and if $\M$ consists of continuous distributions we are allowed to consider $\tilde{g}$ and $-\tilde{g}$ in the above variational inequality leading to a first order condition of the form
 $$ \Lambda(g) = g(s) - \gamma\frac{d}{dz}\Phi_0(g(s)) - g_B(s) = 0.$$
\end{proof}


\subsection{Proof of Proposition \ref{Prop2.8}}\label{A.3}

Before the proof of Proposition \ref{Prop2.8} we need to state and prove the following lemmas.

\begin{lemma}\label{FR-FUN-DER} The following hold:

\begin{itemize}

\item[(i)]  The normalized Fr\'echet function $\overline{F}_{\M} : \R^{d} \to \R_{+}$ defined by 
$\overline{F}_{\M}(m):=\sum_{i=1}^{n}w_{i} \|m -m_i\|^2 - \overline{V}_{\M}$ is twice differentiable on $\R^{d}$ and satisfies
\begin{eqnarray*}
[D\overline{F}_{\M}(m)](\tm):= \left . \frac{d }{d\epsilon} \overline{F}_{\M}(m + \epsilon \tm) \right\vert_{\epsilon=0} = 2 \langle m-m_{B}, \tm\rangle, \,\,\, \forall \, \tm \in \R^{d},
\end{eqnarray*}
where $\langle \cdot , \cdot \rangle$ denotes the usual inner product in $\R^{d}$, and
\begin{eqnarray*}
[D^2\overline{F}_{\M}(m)](\tm_1,\tm_2):= \left. \frac{\partial^2 }{\partial\epsilon_1\partial \epsilon_2} \overline{F}_{\M}(m + \epsilon_1 \tm_1 + \epsilon_2 \tm_2) \right\vert_{\epsilon_1=\epsilon_2=0} = 2 \langle\tm_1,\tm_2\rangle, \,\,\, \forall \, \tm_1,\tm_2 \in \R^{d}.
\end{eqnarray*}
On the Wasserstein barycenter $m_{B}$ it holds that $\overline{F}_{\M}(m_{B})=0$ and $[D\overline{F}_{\M}(m_{B})](\tm)=0$ for every $\tm \in \R^d$.

\item[(ii)] The normalized Fr\'echet function  $F^{0}_{\M} : {\mathbb P}(d) \to \R_{+}$ defined by
 $$F_{\M}^{0}(S):=\sum_{i=1}^{n} w_{i} Tr\left( S + S_i -2 (S_i^{1/2} S S_{i}^{1/2})^{1/2} \right)-V^{0}_{\M}$$
 is twice differentiable on ${\mathbb P}(d)$ and satisfies
\begin{eqnarray*}
[DF^{0}_{\M}(S)](\tS)=\left . \frac{d}{d\epsilon}F^{0}_{\M}(S+\epsilon \tS) \right\vert_{\epsilon=0} = Tr\left( \tS - 2 \sum_{i=1}^{n} w_{i} Y_{i}(\tS) \right), \,\,\, \forall \, \tS \in {\mathbb P}(d),
\end{eqnarray*}
where $Y_{i} :=Y_{i}(\tS)$ is the solution of the Sylvester equation 
\begin{eqnarray}\label{SYLV-1}
Y_{i} (S_{i}^{1/2} S S_{i}^{1/2})^{1/2} +  (S_{i}^{1/2} S S_{i}^{1/2})^{1/2} Y_{i} = S_{i}^{1/2} \tS S_{i}^{1/2}.
\end{eqnarray}
and
\begin{eqnarray*}
[D^{2}F^{0}_{\M}(S)](\tS_1,\tS_2)= \left . \frac{\partial^2}{\partial \epsilon_1 \partial \epsilon_2}F^{0}_{\M}(S + \epsilon_1 \tS_1+\epsilon_2 \tS_2) \right\vert_{\epsilon_1=\epsilon_2 =0} \\
= -2 Tr( \sum_{i=1}^{n} w_{i} Z_{i}(\tS_1,\tS_2)), \,\,\, \forall \, \tS_1,\tS_2 \in {\mathbb P}(d),
\end{eqnarray*}
where $Z_{i}:=Z_{i}(\tS_1,\tS_2)$ is the solution of the Sylvester equation 
\begin{eqnarray}\label{SYLV-2}
Z_{i} (S_{i}^{1/2} S S_{i}^{1/2})^{1/2} +  (S_{i}^{1/2} S S_{i}^{1/2})^{1/2} Z_{i} = - Y_{i}(\tS_1)Y_{i}(\tS_2)-Y_{i}(\tS_2)Y_{i}(\tS_1).
\end{eqnarray}
On the Wasserstein barycenter $S_{B}$ it holds that $F^{0}_{\M}(S_{B})=0$ and $[DF^{0}_{\M}(S_{B})](\tS)=0$ for every $\tS \in {\mathbb P}(d)$. 
\end{itemize}

\end{lemma}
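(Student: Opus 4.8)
The plan is to treat the two parts separately, since the mean part (i) is an elementary computation while the scatter part (ii) carries all the analytic content. For (i), I would simply expand the quadratic: using $\sum_{i=1}^{n} w_i = 1$ and $m_{B}=\sum_{i=1}^{n} w_i m_i$ one has $\sum_{i=1}^{n} w_i \|m-m_i\|^2 = \|m\|^2 - 2\langle m, m_{B}\rangle + \sum_{i=1}^{n} w_i\|m_i\|^2$, so the first variation in a direction $\tm$ is $2\langle m-m_{B},\tm\rangle$, and since the function is quadratic with constant Hessian $2I$ the second variation is the bilinear form $2\langle \tm_1,\tm_2\rangle$. Evaluating at the barycenter gives $\overline{F}_{\M}(m_{B})=\overline{V}_{\M}-\overline{V}_{\M}=0$ and, from the first-variation formula, $[D\overline{F}_{\M}(m_{B})](\tm)=0$.

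The core of (ii) is the differentiation of the only nonlinear term, $S\mapsto Tr((S_i^{1/2} S S_i^{1/2})^{1/2})$. I would set $M_i(S):=(S_i^{1/2} S S_i^{1/2})^{1/2}$ and note that, for $S,S_i\in{\mathbb P}(d)$, the inner matrix $S_i^{1/2} S S_i^{1/2}$ is positive definite (being congruent to $S$); since $B\mapsto B^{1/2}$ is smooth on the cone of positive definite matrices, $M_i$ is smooth in $S$ and $Tr(M_i(\cdot))$ is twice differentiable, which establishes the claimed regularity. To get the first derivative I would differentiate the identity $M_i(S)^2 = S_i^{1/2} S S_i^{1/2}$ along $S+\epsilon\tS$: writing $Y_i:=\tfrac{d}{d\epsilon}M_i(S+\epsilon\tS)\vert_{\epsilon=0}$ and using $\tfrac{d}{d\epsilon}(S_i^{1/2}(S+\epsilon\tS)S_i^{1/2})=S_i^{1/2}\tS S_i^{1/2}$, the product rule yields $Y_i M_i + M_i Y_i = S_i^{1/2}\tS S_i^{1/2}$, which is the Sylvester equation \eqref{SYLV-1}. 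Taking the trace of $\tfrac{d}{d\epsilon}(S+S_i-2M_i)$, weighting by $w_i$ and summing (with $\sum_{i=1}^{n} w_i=1$) then gives $[DF^{0}_{\M}(S)](\tS)=Tr(\tS-2\sum_{i=1}^{n} w_i Y_i)$.

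For the Hessian I would differentiate \eqref{SYLV-1} a second time, now along $S+\epsilon_2\tS_2$, treating $Y_i=Y_i(\tS_1)$ as implicitly defined and using $\tfrac{d}{d\epsilon_2}M_i=Y_i(\tS_2)$. The product rule applied to $Y_i M_i + M_i Y_i = S_i^{1/2}\tS_1 S_i^{1/2}$ gives, for $Z_i:=\tfrac{d}{d\epsilon_2}Y_i(\tS_1)$, the relation $Z_i M_i + M_i Z_i = -(Y_i(\tS_1)Y_i(\tS_2)+Y_i(\tS_2)Y_i(\tS_1))$, i.e. \eqref{SYLV-2}, whence $[D^2 F^{0}_{\M}(S)](\tS_1,\tS_2)=-2\,Tr(\sum_{i=1}^{n} w_i Z_i)$. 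Finally, $F^{0}_{\M}(S_{B})=0$ holds by construction, since $V^{0}_{\M}={\mathbb F}^{0}_{\M}(S_{B})$, and the vanishing of $[DF^{0}_{\M}(S_{B})]$ follows because $S_{B}$ is an interior minimizer of the nonnegative function $F^{0}_{\M}$ on the open cone ${\mathbb P}(d)$; alternatively one substitutes the barycenter equation \eqref{BAR-EQUATION} into the first-order formula.

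The main obstacle is the rigorous justification of the solvability and uniqueness of the Sylvester equations and, underlying it, the differentiability of the matrix square root. The Sylvester operator $X\mapsto X M_i + M_i X$ has spectrum $\{\lambda_j+\lambda_k\}$, where the $\lambda_j>0$ are the eigenvalues of $M_i$; positive definiteness of $M_i$ makes this operator invertible, which simultaneously yields unique $Y_i$ and $Z_i$ and, via the inverse function theorem applied to $M\mapsto M^2$, supplies the smoothness of $M_i(S)$ used throughout. I expect verifying these invertibility and smoothness facts to be the one place where genuine care is needed; the remaining manipulations are routine once the chain and product rules are licensed.
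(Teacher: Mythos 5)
Your proposal is correct and follows essentially the same route as the paper: part (i) is the same elementary quadratic expansion, and for part (ii) both arguments differentiate the defining identity $\phi_i(S)^2 = S_i^{1/2} S S_i^{1/2}$ once and twice to obtain the two Sylvester equations, pass to the Fr\'echet function's derivatives by linearity of the trace, and invoke Fermat's principle at the barycenter for the vanishing of $F^{0}_{\M}$ and its first derivative at $S_B$. Your explicit justification that the Sylvester operator $X \mapsto X M_i + M_i X$ is invertible (its spectrum consists of sums of positive eigenvalues) and that the matrix square root is smooth on the positive definite cone is a welcome rigorization of a point the paper leaves implicit, but it does not alter the structure of the argument.
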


\begin{proof} (i) We can easily see that 
\begin{eqnarray*}
\overline{F}_{\M}(m+ \epsilon_1 \tm_1 + \epsilon_2 \tm_2)=\overline{F}_{\M}(m) + 2 \sum_{i=1}^{n} w_{i} \langle m-m_i,\epsilon_1 \tm_1 + \epsilon_2 \tm_2\rangle \\+ \sum_{i=1}^{n} w_{i} \langle\epsilon_1 \tm_1 + \epsilon_2 \tm_2, \epsilon_1 \tm_1 + \epsilon_2 \tm_2\rangle,
\end{eqnarray*}
from which the claim follows easily upon differentiating with respect to $\epsilon_1$ and $\epsilon_2$ and keeping in mind that $\sum_{i=1}^{n}w_{i}=1$.

(ii) We express $F^{0}_{\M}$ as $F^{0}_{\M}(S)=Tr(S -2 \sum_{i=1}^{n}w_{i}\phi_{i}(S)) + Tr(\sum_{i=1}^{n} w_{i} S_{i}) -V_{\M}$, where $\phi_{i}(S)=(S_{i}^{1/2}S S_{i}^{1/2})^{1/2}$.

For the function $\phi_{i} : {\mathbb P}(d) \to {\mathbb P}(d)$ it holds that
\begin{eqnarray}\label{LEM-031}
[D\phi_{i}(S)](\tS):=\left .\frac{d}{d\epsilon} \phi_{i}(S+\epsilon \tS) \right\vert_{\epsilon=0} = Y_{i},
\end{eqnarray}
where $Y_{i}:=Y_{i}(\tS)$ is the solution of the matrix Sylvester equation \eqref{SYLV-1} and
\begin{eqnarray}\label{LEM-032}
 [D^2\phi_{i}(S)](\tS_1,\tS_2):=\left .\frac{d^2}{d\epsilon^2} \phi_{i}(S+\epsilon \tS) \right\vert_{\epsilon=0} = Z_{i}(\tS_1,\tS_2),
\end{eqnarray}
where $Z_{i}:=Z_{i}(\tS_1,\tS_2)$ is the solution of the matrix Sylvester equation \eqref{SYLV-2}.
To calculate $[D\phi_{i}(S)](\tS)$ note that $\phi_{i}(S+\epsilon \tS) \phi_{i}(S+\epsilon \tS)=S_{i}^{1/2}(S+ \epsilon \tS ) S_{i}^{1/2}$,
differentiate with respect to $\epsilon$ to obtain
$$ \frac{d}{d \epsilon}\phi_{i}(S+\epsilon \tS) \phi_{i}(S + \epsilon \tS) + \phi_{i}(S + \epsilon \tS) \frac{d}{d\epsilon}\phi_{i}(S+\epsilon \tS) = S_{i}^{1/2}\tS S_{i}^{1/2}, $$
and then set $\epsilon=0$  to see that $Y_{i}(\tS)$ solves \eqref{SYLV-1}.

To calculate the second derivative $[D^{2}\phi_{i}(S)](\tS_1,\tS_2)$, note 
 that $\phi_{i}(S+\epsilon_1 \tS_1+ \epsilon_2 \tS_2) \phi_{i}(S+\epsilon_1 \tS_1 + \epsilon _2 \tS_2)=S_{i}^{1/2}(S+ \epsilon_1 \tS_1 +\epsilon_2 \tS_2) S_{i}^{1/2}$ and differentiating with respect to $\epsilon_1$ we obtain
\begin{eqnarray}\label{DIFF-01}
\frac{\partial}{\partial \epsilon_1}\phi_{i}(S+\epsilon_1 \tS_1 + \epsilon_2 \tS_2) \phi_{i}(S + \epsilon_1 \tS_1 + \epsilon_2 \tS_2) \nonumber  \\+ \phi_{i}(S + \epsilon_1 \tS_1 + \epsilon_2 \tS_2) \frac{\partial}{\partial\epsilon_1}\phi_{i}(S+\epsilon_1 \tS_1 +\epsilon_2 \tS_2) = S_{i}^{1/2}\tS_1, S_{i}^{1/2},
\end{eqnarray}
and setting $\epsilon_1=\epsilon_2=0$ we see that $\left . \frac{\partial}{\partial \epsilon_1}\phi_{i}(S+\epsilon_1 \tS_1 + \epsilon_2 \tS_2) \right \vert_{\epsilon_1=\epsilon_2=0}$ equals $Y_{i}(\tS_1)$ the solution of \eqref{SYLV-1} with $\tS=\tS_1$ on the right hand side.
We  further differentiate  \eqref{DIFF-01} once more with respect to $\epsilon_2$ to obtain
\begin{eqnarray*}
\frac{\partial^2}{\partial\epsilon_1 \partial \epsilon_2}\phi_{i}(S+\epsilon_1 \tS_1 +\epsilon_2 \tS_2)  \phi_{i}(S + \epsilon_1 \tS_1+\epsilon_2 \tS_2) \\
+   \frac{\partial}{\partial\epsilon_1}\phi_{i}(S+\epsilon_1 \tS_1+\epsilon_2 \tS_2)  \frac{\partial}{\partial\epsilon_2}\phi_{i}(S+\epsilon_1 \tS_1+ \epsilon_2 \tS_2) \\
+  \frac{\partial}{\partial\epsilon_2}\phi_{i}(S+\epsilon_1 \tS_1+\epsilon_2 \tS_2)  \frac{\partial}{\partial\epsilon_1}\phi_{i}(S+\epsilon_1 \tS_1+ \epsilon_2 \tS_2)\\
+ \phi_{i}(S + \epsilon_1 \tS_1+\epsilon_2 \tS_2)\frac{\partial^2}{\partial\epsilon_1 \partial \epsilon_2}\phi_{i}(S+\epsilon_1 \tS_1 \epsilon_2 \tS_2)=0,
\end{eqnarray*}
and setting $\epsilon_1=\epsilon_2=0$ we see that $Z_{i}(\tS_1,\tS_2)$ solves \eqref{SYLV-2}. Using  the linearity of the trace function and  \eqref{LEM-031} we see that 
\begin{eqnarray*}
[DF^{0}_{\M}(S)](\tS)=\left . \frac{d}{d\epsilon}F_{\M}(S+\epsilon \tS) \right\vert_{\epsilon=0} &=& Tr\left(\tS  -2 \sum_{i=1}^{n} w_{i} [D\phi_i](S)(\tS) \right)\\ 
&=& Tr\left(\tS - 2 \sum_{i=1}^{n} w_{i} Y_{i}(\tS)\right),
\end{eqnarray*}
where $Y_{i}$ is the solution of the Sylvester equation \eqref{SYLV-1}.  
We now take the second derivative of the Fr\'echet function and obtain that
\begin{eqnarray*}
[D^{2}F^{0}_{\M}(S)](\tS_1,\tS_2)= \left . \frac{\partial^2}{\partial\epsilon_1\partial \epsilon_2}F_{\M}(S + \epsilon_1 \tS_1+\epsilon_2 \tS_2) \right\vert_{\epsilon_1=\epsilon_2 =0} &=&- 2 Tr \left( \sum_{i=1}^{n} w_{i}[ D^2\phi_i (S)](\tS_1,\tS_2) \right)\\ 
&=& -2 Tr\left( \sum_{i=1}^{n} w_{i} Z_{i}(\tS_1,\tS_2) \right)
\end{eqnarray*}

Since $S_{B}$ is the minimizer of $F^{0}_{\M}$ and by definition $V_{\M}=\min_{S \in {\mathbb P}(d)} F^{0}_{\M}(S)$ it holds that $F^{0}_{\M}(S_B)=0$ and $[DF^{0}_{\M}(S_{B})](\tS)=0$ for all $\tS \in {\mathbb P}(d)$, by using Fermat's principle. By the fact that $S_{B}$ is the minimizer we see that $[D^2F^{0}_{\M}(S_{B})](\tS,\tS) \ge 0$ for every $\tS \in {\mathbb P}(d)$.

To see that the minimizer $S_{B}$ coincides with the solution of  \eqref{BAR-EQUATION} we use the results of \cite{alvarez2016fixed} or \cite{bhatia2017bures}).
\end{proof}

It turns out that $Tr(\sum_{i=1}^{n}w_{i} Y_{i})$ where $Y_{i}$ is the solution of the Sylvester equation \eqref{SYLV-1} can be expressed directly in terms of $S_{i}$, $i=1,\ldots, n$.  The following Lemma is essentially a result of  \cite{bhatia2017bures}, the proof of which is included here briefy for the ease of the reader.

\begin{lemma}[\cite{bhatia2017bures}]\label{LEMMA-04}  If $Y_{i}=Y_{i}(\tS)$ is the solution of the Sylvester equation \eqref{SYLV-1}, then 
\begin{eqnarray*}
Tr\left(\tS-2 \sum_{i=1}^{n}w_{i} Y_{i}(\tS) \right)=Tr\left( \left[I-\sum_{i=1}^{n} w_{i} S^{-1/2}(S^{1/2}S_{i}S^{1/2})^{1/2} S^{-1/2}\right]\tS \right).
\end{eqnarray*}
\end{lemma}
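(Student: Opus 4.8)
The plan is to reduce the whole statement to a single per-index identity. Since the contribution of $\tS$ on the left matches the $I\cdot\tS$ contribution on the right, and the weights $w_i$ are fixed, it suffices to show
$$2\,Tr(Y_i(\tS)) = Tr\left( S^{-1/2}(S^{1/2}S_i S^{1/2})^{1/2}S^{-1/2}\,\tS \right), \qquad i=1,\ldots,n,$$
and then multiply by $w_i$, sum, and subtract from $Tr(\tS)$.

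First I would extract the trace of the Sylvester solution without solving for $Y_i$ itself. Writing $G_i := (S_i^{1/2} S S_i^{1/2})^{1/2}$, the equation \eqref{SYLV-1} reads $Y_i G_i + G_i Y_i = S_i^{1/2}\tS S_i^{1/2}$. Diagonalising the positive definite matrix $G_i = Q\Lambda Q^{T}$ and passing to its eigenbasis turns the equation into the componentwise relation $(\lambda_j+\lambda_k)(Q^{T}Y_i Q)_{jk} = (Q^{T}S_i^{1/2}\tS S_i^{1/2}Q)_{jk}$, whose diagonal entries give $Tr(Y_i) = \tfrac{1}{2} Tr(S_i^{1/2}\tS S_i^{1/2} G_i^{-1})$. (Equivalently one may use the integral representation $Y_i = \int_0^\infty e^{-tG_i}(S_i^{1/2}\tS S_i^{1/2})e^{-tG_i}\,dt$ together with $\int_0^\infty e^{-2tG_i}\,dt = \tfrac{1}{2} G_i^{-1}$.) By cyclicity of the trace this reads $2\,Tr(Y_i) = Tr\left(\tS\, S_i^{1/2} G_i^{-1} S_i^{1/2}\right)$.

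The crux is then the purely algebraic identity
$$S_i^{1/2} (S_i^{1/2} S S_i^{1/2})^{-1/2} S_i^{1/2} = S^{-1/2}(S^{1/2} S_i S^{1/2})^{1/2} S^{-1/2}.$$
I would prove this by setting $P := S^{1/2} S_i^{1/2}$, so that $P^{T}P = S_i^{1/2} S S_i^{1/2}$ and $P P^{T} = S^{1/2} S_i S^{1/2}$. Multiplying the desired identity on the left and on the right by $S^{1/2}$ reduces it to $P(P^{T}P)^{-1/2}P^{T} = (PP^{T})^{1/2}$, which follows at once from the singular value decomposition $P = U\Sigma V^{T}$: the left side equals $U\Sigma V^{T}V\Sigma^{-1}V^{T}V\Sigma U^{T} = U\Sigma U^{T}$, while the right side is $(U\Sigma^2 U^{T})^{1/2} = U\Sigma U^{T}$. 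Here $S, S_i \in {\mathbb P}(d)$ guarantees that $P$ is invertible, so every inverse square root above is well defined.

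Combining the three steps gives $2\,Tr(Y_i) = Tr\left(\tS\, S^{-1/2}(S^{1/2} S_i S^{1/2})^{1/2} S^{-1/2}\right)$ for each $i$, and the claim follows by the weighted summation described above. The only genuine obstacle is the matrix identity of the third step: the Sylvester kernel is built from the product $S_i^{1/2} S S_i^{1/2}$, whereas the target transport map is built from the non-commuting rearrangement $S^{1/2} S_i S^{1/2}$, and these are distinct matrices. The substitution $P = S^{1/2}S_i^{1/2}$ is precisely what interchanges the two, after which the singular value decomposition renders the equality transparent.
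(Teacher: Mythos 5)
Your proof is correct, and its overall skeleton matches the paper's: first extract $Tr(Y_i)$ from the Sylvester equation, then convert $S_i^{1/2}(S_i^{1/2}SS_i^{1/2})^{-1/2}S_i^{1/2}$ into $S^{-1/2}(S^{1/2}S_iS^{1/2})^{1/2}S^{-1/2}$. The difference lies in how that crucial conversion is established. The paper, following \cite{bhatia2017bures}, uses the integral representation of the Sylvester solution together with $\int_0^\infty e^{-2tG_i}\,dt=\tfrac{1}{2}G_i^{-1}$, recognizes $S_i^{1/2}(S_i^{-1/2}S^{-1}S_i^{-1/2})^{1/2}S_i^{1/2}$ as the matrix geometric mean $S_i\,\#\,S^{-1}$, and then invokes the symmetry $A\,\#\,B=B\,\#\,A$, justified through the Riccati characterization $XA^{-1}X=B$. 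You instead set $P=S^{1/2}S_i^{1/2}$ and verify $P(P^{T}P)^{-1/2}P^{T}=(PP^{T})^{1/2}$ directly by singular value decomposition, which is more elementary and entirely self-contained: it requires nothing beyond invertibility of $P$ (guaranteed by positive definiteness of $S$ and $S_i$) and the SVD, and does not presuppose any geometric-mean theory. What the paper's route buys in exchange is the structural link to the operator geometric mean and the Bures--Wasserstein geometry, the same machinery that underlies the barycenter fixed-point equation \eqref{BAR-EQUATION} used elsewhere in the paper. Your trace extraction through the eigenbasis of $G_i$ (only the diagonal entries matter) is likewise a clean alternative to the paper's integral representation, and you correctly note that the two are interchangeable since only the trace of $Y_i$ is needed.
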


\begin{proof}
By the theory of the Sylvester equation (using also the fact that the matrices $S_{i},S \in {\mathbb P}(d)$) the solution $Y_{i}=Y_{i}(\tS)$ can be expressed as
\begin{eqnarray*}
Y_{i}=\int_{0}^{\infty} \exp(-t(S_{i}^{1/2}S S_{i}^{1/2})^{1/2}) S_{i}^{1/2} \tS S_{i}^{1/2} \exp(-t(S_{i}^{1/2}S S_{i}^{1/2})^{1/2}) dt
\end{eqnarray*}
and following \cite{bhatia2017bures} we see that using the properties of the trace (the invariance of the trace for any number of permutations of symmetric matrices) we see that
\begin{eqnarray*}
Tr(2 \sum_{i=1}^{n} w_{i} Y_{i})=\sum_{i=1}^{n} w_{i} \int_{0}^{\infty}  Tr\left( S_{i}^{1/2} \exp(-2 t(S_{i}^{1/2}S S_{i}^{1/2})^{1/2}) S_{i}^{1/2} \tS \right) dt  \\
=\sum_{i=1}^{n} w_{i} Tr\left(S_{i}^{1/2} \left(\int_{0}^{\infty}   \exp(-2 t(S_{i}^{1/2}S S_{i}^{1/2})^{1/2}) ) dt \right) S_{i}^{1/2} \tS \right).
\end{eqnarray*}
It holds that $\int_{0}^{\infty} e^{-t B} dt = B^{-1}$,  for any $B \in {\mathbb P}(d)$ (since $\int_{0}^{T} e^{t B} dt = B^{-1}(e^{T B}- I)$,  for non singular matrices $B$), so that
 $$\int_{0}^{\infty}   \exp(-2 t(S_{i}^{1/2}S S_{i}^{1/2})^{1/2}) ) dt =\frac{1}{2}(S_{i}^{1/2} S S_{i}^{1/2})^{-1/2}=\frac{1}{2}(S_{i}^{-1/2}S^{-1} S_{i}^{-1/2})^{1/2}, $$
so that
\begin{eqnarray*}
Tr(2 \sum_{i=1}^{n} w_{i} Y_{i})=\sum_{i=1}^{n} w_{i} Tr( S_{i}^{1/2} (S_{i}^{-1/2}S^{-1} S_{i}^{-1/2})^{1/2} S_{i}^{1/2} \tS )
=Tr(\sum_{i=1}^{n} w_{i} (S_{i}\# S^{-1})\tS),
\end{eqnarray*}
where $A\#B :=A^{1/2}(A^{-1/2}B A^{-1/2})^{1/2} A^{1/2}$ is the geometric mean of the matrices $A$ and $B$, or equivalently the unique positive  solution as the Riccati equation $\tS A^{-1} \tS=B$. Since the geometric mean is symmetric i.e., $A\# B=B\# A$, we have that 
\begin{eqnarray*}
Tr(2 \sum_{i=1}^{n} w_{i} Y_{i})=Tr(\sum_{i=1}^{n} w_{i} (S^{-1}\# S_{i})\tS)=Tr(\sum_{i=1}^{n} w_{i} S^{-1/2}(S^{1/2}S_{i}S^{1/2})^{1/2} S^{-1/2} \tS),
\end{eqnarray*}
from which we conclude that
\begin{eqnarray*}
[DF_{\M}^0(S)](\tS)=Tr ( [I - \sum_{i=1}^{n} w_{i} (S_{i}\# S^{-1})]\tS)=Tr ( [I - \sum_{i=1}^{n} w_{i} (S^{-1}\# S_{i})]\tS)\\
=Tr\left( \left[I-\sum_{i=1}^{n} w_{i} S^{-1/2}(S^{1/2}S_{i}S^{1/2})^{1/2} S^{-1/2} \right]\tS \right).
\end{eqnarray*}
\end{proof}

Now we have the required results to prove Proposition \ref{Prop2.8}.
\begin{proof}
Assume that the position of the firm is provided by the risk mapping $-X=\Phi_0(Z)$. Under the measure $\mu = LS(m,S)$, we have that ${\mathbb E}_{\mu}[-X]=\int_{\R^d} \Phi_0(m + S^{1/2} z) d\nu(z)$, where $\nu$ is the probability measure on $\R^d$ which characterizes the random variable $Z_0$ which generates the whole location - scatter family.  Using the definition of the function $\Phi  : \R^{d} \times \R^{d\times d} \to \R_{+}$ we conclude that $\RW(X)$  is given by the solution of the optimization problem \eqref{MATRIX-WAS-OPT}. By the strict convexity of the  function $F_{\M}=\overline{F}_{\M}+F^{0}_{\M}$, this problem is well posed for small enough values of $\gamma>0$, regardless of the form of the risk mapping $\Phi_0$, whereas if $\Phi$ is concave it is well posed for any $\gamma>0$.

We derive the first order conditions for problem \eqref{MATRIX-WAS-OPT}.  Since we have already computed the Fr\'echet derivative of the penalty function $F_{\M}$  in Lemma \ref{FR-FUN-DER}  we  just need to calculate the Fr\'echet derivative of the function $\Phi$. For any $(\tm,\tS) \in \R^{d}\times {\mathbb P}(d)$ by definition   $[D\Phi(m,S)](\tm,\tS) := \left . \frac{d}{d\epsilon} \Phi(m+\epsilon \tm, S + \epsilon \tS) \right\vert_{\epsilon =0} $. Assuming the necessary smoothness of $\Phi$ and using Lebesgue's dominated convergence theorem we may calculate
\begin{eqnarray*}
[D\Phi(m,S)](\tm,\tS) = \sum_{\ell=1}^{d} \Phi_{\ell}(m,S)  \tm_{\ell}  + \frac{1}{2} \sum_{\ell=1}^d \sum_{k=1}^d \left[ \left( \Psi S^{-1/2} + S^{-1/2} \Psi^{T} \right) \tS  \right]_{\ell, k},
\end{eqnarray*}
where we have defined the functions $\Phi_{\ell}, \Psi_{\ell k} : \R^{d} \times {\mathbb P}(d) \to \R$  as in \eqref{NEW-FUN-DEF}. Using the notation in \eqref{NOTATION-0} we see that we may express $[D\Phi(m,S)](\tm,\tS)$ in terms of
\begin{eqnarray*}
[D\Phi(m,S)](\tm,\tS)= \langle D_{m}\Phi(m,S), \tm\rangle + Tr ( D_{S}\Phi(m,S) \tS).
\end{eqnarray*}
The first order conditions for problem \eqref{MATRIX-WAS-OPT} become
\begin{eqnarray*}
\langle  D_{m}\Phi(m,S)  , \tm \rangle+ Tr ( D_{S}\Phi(m,S) \tS)+[D\overline{F}_{\M}(m)](\tm)+ [DF^{0}_{\M}(S)](\tS)=0, \,\,\, \forall \,(\tm,\tS) \in \R^{d}\times {\mathbb P}(d).
\end{eqnarray*}
Using Lemma \ref{FR-FUN-DER} and Lemma \ref{LEMMA-04}, we see that the first order conditions for problem \eqref{MATRIX-WAS-OPT} become
\begin{eqnarray*}
\langle D_{m}\Phi(m,S),\tm\rangle- \frac{1}{\gamma}\langle m-m_{B},\tm\rangle=0, \,\,\, \forall \, \tm \in \R^{d}, \\
Tr\left( \left( D_S\Phi(m,S) - \frac{1}{2\gamma}\left[ I - \sum_{i=1}^n w_i S^{-1/2}(S^{1/2} S_i S^{1/2})^{1/2} S^{-1/2} \right] \right) \tS \right) = 0, \,\,\, \forall \, \tS \in {\mathbb P}(d).
\end{eqnarray*}
However, the above conditions can be further reduced to the more convenient form
\begin{eqnarray*}
D_{m}\Phi(m,S)- \frac{1}{\gamma} (m-m_{B})=0, \\
D_{S}\Phi(m,S) - \frac{1}{2\gamma} \left( I -\sum_{i=1}^{n} w_{i} S^{-1/2}(S^{1/2}S_{i}S^{1/2})^{1/2} S^{-1/2} \right)=0,
\end{eqnarray*}
whereby multiplying the second from the left and from the right by $S^{-1/2}$ yields the stated result.
\end{proof} 

\subsection{Proof of Remark \ref{Rem2.9}}\label{A.4}
\begin{proof}
In fact, we may use a perturbative  expansion of the form $m=m_{B} + \gamma \tm + \ldots $, and $S=S_{B} + \gamma \tS + \ldots$  where $(\tm,\tS) \in \R^{d} \times {\mathbb P}(d)$ are corrections to be determined.
\\
Once such an expansion for the maximizer is available, then an expansion for the risk measure $\RW(X)$ can be obtained. Using Lemma \ref{FR-FUN-DER} we see that up to first order in the parameter $\gamma$, we have the expansion
\begin{eqnarray*}
\RW(X)=\Phi(m_{B},S_{B}) + \\
\gamma \left( ( D_{m}\Phi(m_ {B},S_{B}),\tm) +Tr( D_{S} \Phi(m_{B},S_{B}) \tS) - \frac{1}{4} [D^2\overline{F}_{\M}(m_{B})](\tm,\tm) - \frac{1}{4} [D^2F^{0}_{\M}(S_{B})](\tS,\tS) \right).
\end{eqnarray*}
Again using Lemma \ref{FR-FUN-DER} we have that
$ [D^2\overline{F}_{\M}(m_{B}) ](\tm,\tm)=2 \| \tm \|^2$ and  $ [D^2 F^0_{\M}(m_{B}) ](\tS,\tS)=-2 Tr (\sum_{i=1}^{n} w_{i}\Z_{i})$ ,
where  the matrices $\Z_i$, $i=1,\dots, n$ are solutions of the system of   Sylvester equations
\begin{eqnarray*}
\Y_{i} (S_{i}^{1/2} S_{B} S_{i}^{1/2})^{1/2} +  (S_{i}^{1/2} S_{B} S_{i}^{1/2})^{1/2} \Y_{i} = S_{i}^{1/2} \tS S_{i}^{1/2}, \,\,\, &&i=1,\ldots, n \\
\Z_{i} (S_{i}^{1/2} S_{B} S_{i}^{1/2})^{1/2} +  (S_{i}^{1/2} S_{B} S_{i}^{1/2})^{1/2}\ Z_{i} = - 2\Y_{i}^2, \,\,\, &&i=1, \ldots, n.
\end{eqnarray*}
Note that the system is linear as the first half of it is uncoupled from the second half. Substituting this into the above equation we have that
\begin{equation}\label{risk-measure-correction}
\begin{aligned}
\RW(X)=\Phi(m_{B},S_{B}) + \\
\gamma \left( ( D_{m}\Phi(m_ {B},S_{B}),\tm) +Tr( D_{S}\Phi(m_{B},S_{B})\tS) - \frac{1}{2} \| \tm \|^2 + \frac{1}{2} Tr\left( \sum_{i=1}^{n} w_{i}\Z_{i} \right)\right),
\end{aligned}
\end{equation}
\\
It thus remains to determine the corrections $(\tm,\tS) \in \R^{d} \times {\mathbb P}(d)$. 
Substituting this expansion in the first equation of \eqref{FOC-A}, assuming sufficient smoothness for $\Phi$, and separating powers  of $\gamma$ we see that 
\begin{eqnarray}\label{mean-correction}
\tm=D_{m}\Phi(m_{B},S_{B}). 
\end{eqnarray}
To obtain the correction for $S_{B}$, we substitute the expansion in the second equation of \eqref{FOC-A} using the Taylor expansion of the matrix square function, according to which $S^{1/2}=(S_{B}+\gamma \tS)^{1/2} \simeq S_{B}^{1/2} + \gamma \J(\tS)$, where $\J=\J(\tS)$ is the solution of the Sylvester equation $\J S_{B}^{1/2}+S_{B}^{1/2} \J=\tS$. We then see that to first order in $\gamma$, we have that $$(S^{1/2}S_{i}S^{1/2})^{1/2}=\left( S_{B}^{1/2} S_{i} S_{B}^{1/2} + \gamma ( \J S_{i} S_{B}^{1/2}+ S_{B}^{1/2}S_{i} \J) \right)^{1/2}$$
and Taylor expanding once more $(S^{1/2}S_{i}S^{1/2})^{1/2}=(S_{B}^{1/2} S_{i} S_{B}^{1/2})^{1/2} + \gamma \H_{i}$, where $\H_{i}$ solves the Sylvester equation  $\H_{i} (S_{B}^{1/2} S_{i} S_{B}^{1/2})^{1/2}+(S_{B}^{1/2} S_{i} S_{B}^{1/2})^{1/2} \H_{i}= (\J S_{i} S_{B}^{1/2}+ S_{B}^{1/2}S_{i} \J)^{1/2}$.  We subsitute these expansions into the second equation of \eqref{FOC-A} and see that the correction $\tS$ can be found by the solution $(\tS, \J,\H_{1},\ldots, \H_{n})$ of the (linear) system of matrix equations
\begin{eqnarray*}
\tS -\sum_{i=1}^{n}w_{i} \H_{i} = 2 S_{B}^{1/2} D_{S}\Phi(m_{B},S_{B}) S_{B}^{1/2}, \\
\tS - \J S_{B}^{1/2} - S_{B}^{1/2} \J =0, \\
\J S_{i} S_{B}^{1/2} + S_{i} S_{B}^{1/2} \J - \H_{i} (S_{B}^{1/2} S_{i} S_{B}^{1/2})^{1/2}-(S_{B}^{1/2} S_{i} S_{B}^{1/2})^{1/2} \H_{i} =0, \,\,\, i=1,\ldots, n.
\end{eqnarray*}
Note that this system has a sparse structure, which allows for its treatment in terms of iterative schemes. Upon substituting the above in \eqref{risk-measure-correction} we conclude that up to first order in $\gamma$, we have
\begin{eqnarray*}
\RW(X)= \Phi(m_{B},S_{B}) + \gamma \left(\frac{1}{2} \| D_{m}\Phi(m_{B},S_{B}) \|^2 +  Tr(D_{S}\Phi(m_{B},S_{B}) \tS) + \frac{1}{2} Tr\left( \sum_{i=1}^{n} w_{i} \Z_{i}\right) \right)
\end{eqnarray*}
\end{proof}
\subsection{Proof of Proposition \ref{Prop2.15} }\label{A.5}
\begin{proof}
Let us denote by $f$ the density related to the probability measure $\mu$ and $f_i$ denote the densities related with the priors in $\M$. The calculation of $\RE(X)$ is reduced to the solution of the minimization problem $\min_{f \in \mathcal{F} } \int_{\R^d} V( z, f(z) ) \ud z$, where $V(z, f(z) = \Phi_0(z) f(z) + \frac{1}{\gamma}\sum_{i=1}^n w_i f(z)\log\frac{f(z)}{f_i(z)}$ for an appropriate space $\mathcal{F}$ for the probability densities $f$, which is well-posed for any bounded set $\mathcal{G}\subset L^1(\R^d)$ which is also uniformly integrable. Notice that uniform integrability is guaranteed by De La Vall\'ee Poussin criterion, since there exists a function $H:\R_+ \to \R_+$ satisfying $\frac{H(u)}{\|u\|}\to +\infty$ as $\|u\|\to +\infty$ and such that $\sup_{f\in\mathcal{G}} \int H(f(u)) \ud u < +\infty$. Therefore, by Dunford-Pettis theorem, the set $\mathcal{G}\subset L^1(\R^d)$ is relatively compact for the weak topology of $L^1$ guaranteeing the well posedness of the problem. We also need to take into account the constraint that $f$ is a  probability density i.e., $\int_{\R^d}f(z) dz =1$. To take this into account we introduce a Lagrange multiplier $\lambda$ and consider the minimization problem $\min_{f \in L^{1+\epsilon}(\R^d)} \int_{\R^d}( V(z, f(z)) + \lambda f(z) ) dz$. We solve the problem for the optimal $f$. Let us denote $f_{G}:=\prod_{i=1}^{n} f^{w_i}_{i}$. A standard argument allows us to obtain the first order condition as $\gamma \Phi_0(z) + \gamma\lambda +1 + \log f(z) - \sum_{i=1}^n w_i \log f_i(z) = 0, \,\,\, z \,\, -a.e.$,
which leads to $f(z)=e^{\gamma \Phi_0(z) - \gamma\lambda - 1}f_{G}(z)$. From the normalization constraint $\int_{\R^d} f(z)\ud z = 1$ we can eliminate the associated Lagrange multiplier and obtain the density form $f_{\gamma}(z)=C_{\gamma}e^{\gamma \Phi_0(z)}f_{G}(z)$, where $C_{\gamma} := \left( \int_{\R^d} e^{\gamma \Phi_0(z)} f_{G}(z) \ud z \right)^{-1}$.
Knowing explicitly the optimal density function leads to the explicit calculation of the weighted entropic risk measure.
Note that at the critical point
\begin{eqnarray*}
\begin{aligned}
V(z, f_{\gamma}(z))
=-\Phi_0(z) e^{\gamma \Phi_0(z)} C_{\gamma} f_{G}(z) + \frac{1}{\gamma} \sum_{i=1}^n w_i \left\{ e^{\gamma \Phi_0(z)} C_{\gamma} f_{G}(z) \log\left( \frac{e^{\gamma \Phi_0(z)} C_{\gamma} f_{G}(z)}{f_i(z)}\right) \right\}\\
= -\Phi_0(z) e^{\gamma \Phi_0(z)} C_{\gamma} f_{G}(z)+\frac{1}{\gamma} \left\{ e^{\gamma \Phi_0(z)} C_{\gamma} f_{G}(z)\left( \log C_{\gamma} +\gamma \Phi_0(z) + \log f_G(z) - \sum_{i=1}^n w_i \log f_i(z)\right) \right\} \\
= -\Phi_0(z) e^{\gamma \Phi_0(z)} C_{\gamma} f_{G}(z) +\Phi_0(z) e^{\gamma \Phi_0(z)} C_{\gamma} f_{G}(z) -\frac{1}{\gamma} e^{\gamma \Phi_0(z)} C_{\gamma} f_{G}(z) \log C_{\gamma}= -\frac{1}{\gamma} e^{\gamma \Phi_0(z)} C_{\gamma} f_{G}(z) \log C_{\gamma}
\end{aligned}
\end{eqnarray*}
keeping in mind that $\int_{\R^d} e^{\gamma \Phi_0(z)} C_{\gamma} f_{G}(z) \ud z=1$. It is easy to check that the minimizer of $V_{\M}$ is the probability density function of the KL-barycenter, i.e. $f_0 = C_0 f_G$ where $C_0 = ( \int f_G dm )^{-1}$. Substituting this density in $V_{\M}$ we obtain that $V_{\M} = \log C_0$. Combining the above, we derive that
\begin{eqnarray*}
\RE(X) = \int_{\R^d} V(z ,f_{\gamma}(z)) \ud z + \frac{1}{\gamma} V_{\M}
= -\frac{1}{\gamma} \int_{\R^d} e^{\gamma \Phi_0(z)} C_{\gamma} f_{G}(z) \log C_{\gamma} \ud z + \frac{1}{\gamma} \log C_0\\
= \frac{1}{\gamma} ( \log C_0 - \log C_{\gamma} ) = \frac{1}{\gamma} \log \left( \frac{C_0}{C_{\gamma}}\right) = \frac{1}{\gamma} \log \left( \int_{\R^d} e^{\gamma \Phi_0(z)} f_0(z) dz\right) = \frac{1}{\gamma}\log( \mathbb{E}_{\mu_0}[e^{-\gamma X}] ).
\end{eqnarray*}
This concludes the proof.
\end{proof}
\subsection{Proof of Proposition \ref{Prop3.1} }\label{A.7}
\begin{proof}
We will use expansion \eqref{Rem2.9} twice, once for the risk position $X(\epsilon)$ and once for the risk position $X$. We will also use the following simplified notation \eqref{NOTATION-123} and \eqref{NOTATION-SIM}. The correction to the mean (around the mean barycenter $m_{B}$) for risk positions $X(\epsilon)$ and $X$ will be denoted by $\tm(\epsilon)$ and $\tm$ respectively, whereas the corection to the covariance (around the  covariance barycenter $S_{B}$) for risk positions $X(\epsilon)$ and $X$ will be denoted by $\tS(\epsilon)$ and $\tS$ respectively. Concerning the correction to the mean we have that $\tm(\epsilon)=M + \epsilon M_{j}$, so that $\tm'(\epsilon):=\frac{d}{d\epsilon} \tm(\epsilon)=M_{j}$. Concerning the correction to the covariance, the analysis is slightly more complicated but fortunately reduces to the solution of a sparse linear system of matrix equations.
Let $\tS(\epsilon)$ be part of the solution  $\tS(\epsilon), J(\epsilon), H_{1}(\epsilon), \ldots, H_{n}(\epsilon))$, of the system of matrix equations
\begin{equation}\label{RA-1}
\begin{aligned}
\tS(\epsilon)-\sum_{i=1}^{n}w_{i} \H_{i}(\epsilon) = 2 B(C + \epsilon C_{j}) B, \\
\tS(\epsilon)-\J(\epsilon) B - B \J(\epsilon)=0, \\
\J(\epsilon) D_{i} + D_{i} \J(\epsilon) -\H_i(\epsilon) E_{i} -E_{i} \H_{i}(\epsilon) =0,
\end{aligned}
\end{equation}
and $(\Y_1(\epsilon),\ldots, \Y_n(\epsilon))$ be the solution of the (decoupled) Sylvester equations
\begin{eqnarray}\label{RA-2}
\Y_{i}(\epsilon)G_{i} +G_{i} \Y_{i}(\epsilon) = B_{i} \tS(\epsilon) B_{i}, \,\,\, i=1,\ldots, n,
\end{eqnarray}
and $(\Z_1(\epsilon),\ldots, \Z_n(\epsilon))$ be the solution of the (decoupled) Sylvester equations
\begin{eqnarray}\label{RA-3}
\Z_{i}(\epsilon) G_{i} + G_{i} \Z_{i}(\epsilon) = -2 \Y_{i}(\epsilon)^2, \,\,\, i=1,\ldots, n.
\end{eqnarray}
Note that we may consider \eqref{RA-1}, \eqref{RA-2} and \eqref{RA-3} as one large sparse system of matrix equations.

We differentiate the above system with respect to $\epsilon$. To ease notation we will denote the derivatives with respect to $\epsilon$ by a prime, i.e., $\tS'(\epsilon)=\frac{d}{d\epsilon}\tS(\epsilon)$ with a similar notation for the other matrices as well.
We then obtain that
\begin{equation}\label{RDP-1}
\begin{aligned}
\tS'(\epsilon)-\sum_{i=1}^{n}w_{i} \H'_{i}(\epsilon) = 2 B  C_{j}  B, \\
\tS'(\epsilon)-\J'(\epsilon) B - B \J'(\epsilon)=0, \\
\J'(\epsilon) D_{i} + D_{i} \J'(\epsilon) - \H'_i(\epsilon) E_{i} -E_{i} \H'_{i}(\epsilon) =0,\\
\Y'_{i}(\epsilon)G_{i} +G_{i} \Y'_{i}(\epsilon) = B_{i} \tS'(\epsilon) B_{i}, \,\,\, i=1,\ldots, n,\\
\Z'_{i}(\epsilon) G_{i} + G_{i} \Z'_{i}(\epsilon) = -2 \Y'_{i}(\epsilon) \Y(\epsilon)-2 \Y'_{i}(\epsilon) \Y_{i}(\epsilon), \,\,\, i=1,\ldots, n.
\end{aligned}
\end{equation}

We now have that using \eqref{EXPANSION-000} with the notation \eqref{NOTATION-SIM} that
\begin{equation}\label{EXPANSION-000-EPS}
\begin{aligned}
\RW(X(\epsilon))= \Phi(m_{B},S_{B},\epsilon)  \\+ \gamma \left(\frac{1}{2} \| D_{m}\Phi(m_{B},S_{B},\epsilon) \|^2 +  Tr(\Phi_{S}(m_{B},S_{B},\epsilon) \tS(\epsilon)) + \frac{1}{2} Tr( \sum_{i=1}^{n} w_{i} \Z_{i}(\epsilon)) \right)= \\
A + \epsilon A_{j} + \gamma \left(  \frac{1}{2} \| M + \epsilon M_{j} \|^2 + Tr(( C + \epsilon C_{j}) \tS(\epsilon)) +\frac{1}{2} Tr( \sum_{i=1}^{n} w_{i} \Z_{i}(\epsilon)) \right),
\end{aligned}
\end{equation}
so that upon differentiation with respect to $\epsilon$ we have that
\begin{eqnarray*}
\frac{d}{d\epsilon} \RW(X(\epsilon))= A_{j} + \gamma ( M, M_{j}) + Tr (C_{j} \tS(\epsilon) + \epsilon C_{j} \tS'(\epsilon)) + \frac{1}{2} Tr (\sum_{i=1}^{n} w_{i} \Z'_{i}(\epsilon) ),
\end{eqnarray*}
and setting $\epsilon =0$ we have that
\begin{eqnarray*}
\RW(X_{k} \mid X) = A_{j} + \gamma (M,M_{j}) + Tr(C_{j} \tS) + \frac{1}{2} Tr(\sum_{i=1}^{n} w_{i} \Z'_{i}(0)) ),
\end{eqnarray*}
where $\Z_{i}':=\Z'_{i}(0)$ solves the system of matrix equations
\begin{equation}\label{RDP-2}
\begin{aligned}
\tS'-\sum_{i=1}^{n}w_{i} \H'_{i} = 2 B  C_{j}  B, \\
\tS'-J' B - B J'=0, \\
\J' D_{i} + D_{i} \J' - \H'_i E_{i} -E_{i} \H'_{i} =0,\\
\Y'_{i} G_{i} + G_{i} \Y'_{i} = B_{i} \tS' B_{i}, \,\,\, i=1,\ldots, n,\\
\Z'_{i} G_{i} + G_{i} \Z'_{i} = -2 \Y'_{i} \Y_i-2 \Y'_{i} \Y_{i}, \,\,\, i=1,\ldots, n.
\end{aligned}
\end{equation}
\end{proof}


\bibliographystyle{agsm}
\bibliography{frechet-risk-measures-multiple-sources}

\end{document}